\newtheorem{lemma}{Lemma}
\newtheorem*{lemma*}{Lemma}
\DeclareMathOperator{\divergence}{div}
\DeclareMathOperator{\support}{supp}
\DeclareMathOperator{\sign}{sign}
\DeclareMathOperator{\conv}{conv}
\begin{document}
\begin{frontmatter}
\title{The Localized Subtraction Approach For EEG and MEG Forward Modeling}

\author[ibb_label]{Malte B. Höltershinken\corref{corresponding_author_ref}}
\ead{m\_hoel20@uni-muenster.de}

\author[ibb_label,imi_label]{Pia Lange}

\author[ibb_label]{Tim Erdbrügger}

\author[ibb_label,occ_label]{Yvonne Buschermöhle}

\author[upjv_label,chup_label]{Fabrice Wallois}

\author[tum_label]{Alena Buyx}

\author[tuni_label]{Sampsa Pursiainen}

\author[umit_label]{Johannes Vorwerk}

\author[wwu_label]{Christian Engwer}

\author[ibb_label,occ_label]{Carsten H. Wolters}

\cortext[corresponding_author_ref]{Corresponding author at: Malmedyweg 15, 48149 Münster, Germany}

\affiliation[ibb_label]{
organization={Institute for Biomagnetism and Biosignalanalysis, University of Münster},
city={Münster},
country={Germany}
}

\affiliation[imi_label]{
organization={Institute of Medical Informatics, University of Münster},
city={Münster},
country={Germany}
}

\affiliation[occ_label]{
organization={Otto Creutzfeldt Center for Cognitive and Behavioral Neuroscience, University of Münster},
city={Münster},
country={Germany}
}

\affiliation[upjv_label]{
organization={INSERM U1105, Research Group on Multimodal Analysis of Brain Function, Jules Verne University of Picardie},
city={Amiens},
country={France}
}

\affiliation[chup_label]{
organization={Pediatric functional exploration of the nervous system department, CHU Picardie},
city={Amiens},
country={France}
}

\affiliation[tum_label]{
organization={Institute of History and Ethics in Medicine, Technical University of Munich},
city={Munich},
country={Germany}
}

\affiliation[tuni_label]{
organization={Computing Sciences Unit, Faculty of Information Technology and Communication Sciences, Tampere University},
city={Tampere},
country={Finland}
}

\affiliation[umit_label]{
organization={Institute of Electrical and Biomedical Engineering, Private University for Health Sciences, Medical Informatics and Technology},
city={Hall in Tyrol},
country={Austria}
}

\affiliation[wwu_label]{
organization={Faculty of Mathematics and Computer Science, University of Münster},
city={Münster},
country={Germany}
}

\begin{abstract}
In FEM-based EEG and MEG source analysis, the subtraction approach has been proposed to simulate sensor measurements generated by neural activity. While this approach possesses a rigorous foundation and produces accurate results, its major downside is that it is computationally prohibitively expensive in practical applications. To overcome this, we developed a new approach, called the localized subtraction approach.
This approach is designed to preserve the mathematical foundation of the subtraction approach, while also leading to sparse right-hand sides in the FEM formulation, making it efficiently computable. We achieve this by introducing a cut-off into the subtraction, restricting its influence to the immediate neighborhood of the source.
In this work, this approach will be presented, analyzed, and compared to other state-of-the-art FEM right-hand side approaches. Furthermore, we discuss how to arrive at an efficient and stable implementation.
We perform validation in multi-layer sphere models where analytical solutions exist. There, we demonstrate that the localized subtraction approach is vastly more efficient than the subtraction approach. Moreover, we find that for the EEG forward problem, the localized subtraction approach is less dependent on the global structure of the FEM mesh, while the subtraction approach is sensitive to the mesh structure even at a distance to the source.
Additionally, we show the localized subtraction approach to rival, and in many cases even surpass, the other investigated approaches in terms of accuracy. Especially for the MEG forward problem, where Biot-Savart's law is employed, we show the localized subtraction approach and the subtraction approach to produce highly accurate approximations of the volume currents close to the source.
The localized subtraction approach thus reduces the computational cost of the subtraction approach to an extent that makes it usable in practical applications without sacrificing the rigorousness and accuracy the subtraction approach is known for.
\end{abstract}

\begin{keyword}
EEG, MEG, source analysis, finite element method, source modeling
\end{keyword}
\end{frontmatter}

\section{Introduction}
\label{sec:introduction}
In electro- (EEG) and magnetoencephalography (MEG), the so-called EEG and MEG forward problems consist of simulating the sensor measurements that a given neural activity would generate \cite{Haem_MEG, handbook_of_neural_activity_measurement}.
The neural activity is typically modeled as a linear combination of mathematical point dipoles, see e.g. \cite{Haem_MEG, handbook_of_neural_activity_measurement, demunck_dipoles_adequate}. The problem of forward modeling is thus the computation of electric potentials (EEG) and magnetic fluxes (MEG) evoked by a single point dipole. Due to the complex geometry of the head, an accurate approximation of potentials and fluxes relies on a realistic volume conductor model, where one then employs some numerical method \cite{Haem_MEG, handbook_of_neural_activity_measurement, ramon_realistic_head_models}. There are various approaches to the computation of these approximations, e.g. boundary element methods (BEM) \cite{fuchs_bem, makarov_bem_fmm, clerc_sym_bem, acar_bem} or finite difference methods (FDM) \cite{montes_restrepo_fdm, turovets_fdm, cuartas_fdm}. The finite element method (FEM) has been proposed for modeling complex head geometries and tissue anisotropy \cite{handbook_of_neural_activity_measurement, lohrengel_fem, vermaas_femfuns} and is therefore in the focus of this work. A central difficulty for FEM, however, is the embedding of the highly singular point dipole \cite{beltrachini_analytic_subtraction, drechsler_subtraction_fem}.  We call a strategy for this embedding a \textit{FEM potential approach}. 
Potential approaches can be roughly divided into so-called \textit{direct approaches} and \textit{subtraction approaches}. In direct approaches, the point dipole is directly incorporated into the FEM right-hand side. State-of-the-art direct approaches, such as the multipolar Venant approach \cite{vorwerk_venant} and the H(div) approach \cite{miinalainen_eeg_source_modeling}, achieve this by substituting the point dipole with a regularized dipole-like object. Subtraction approaches on the other hand analytically handle the dipole by ``subtracting'' the singularity out of the problem formulation, arriving at a regular problem, and then post-process the resulting FEM solution to add the singularity back in. A substantial amount of work has gone into the development and evaluation of subtraction approaches \cite{thevenet_subtraction, van_den_broek_subtraction, awada_subtraction, marin_subtraction, schimpf_dipole_models, wolters_subtraction_method, drechsler_subtraction_fem, engwer_dg_fem_eeg, beltrachini_projected_subtraction, beltrachini_analytic_subtraction}. In these studies, subtraction approaches were shown to  posess a rigorous mathematical foundation and to produce highly accurate results. Despite this, they are typically not used in practical applications.

The main reason for this is their high computational cost. In source analysis, one typically needs to compute forward solutions for tens of thousands of source positions, and, even when employing a fast transfer matrix approach \cite{gencer_transfer_matrices},  \cite{drechsler_subtraction_fem}, the time it takes to assemble all FEM right-hand sides for subtraction-type approaches in realistic head models becomes computationally infeasible. Different approaches have been suggested to remedy this drawback of subtraction methods. The projected subtraction approach \cite{wolters_subtraction_method} and the projected gradient subtraction approach \cite{beltrachini_projected_subtraction} substitute complicated integrals with simpler ones and thus try to strike a balance between accuracy and computational burden. The analytic subtraction approach \cite{beltrachini_analytic_subtraction} on the other hand derives analytical formulas for the FEM right-hand side and thus removes the need for an expensive numerical calculation. While these approaches all reduce the computational cost of the full subtraction approach as detailed in \cite{drechsler_subtraction_fem}, none of them has succeeded in making subtraction methods viable in practice. 

The fundamental reason for this is that all previous subtraction approaches lead to FEM formulations that produce dense right-hand sides. Assembling these right-hand sides requires an iteration over the whole mesh for every dipole under consideration, where for each element an integration needs to be performed. Especially for realistic head models, often consisting of millions of elements (e.g. \cite{piastra_dataset}), this leads to the aforementioned high computational costs. State-of-the-art direct approaches on the other hand produce sparse right-hand sides, befitting the focality of a point dipole. These sparse right-hand sides can then be assembled rapidly. 

Aside from the high computational complexity, another problem of the different subtraction approaches arises for the MEG forward problem. When utilizing a finite element method to compute magnetic fluxes, one typically first computes an approximation to the electric potential and then employs the Biot-Savart law (see \cite{Haem_MEG}). Using a direct approach, this poses no difficulty, as the numerical solution directly yields the electric current in a form suitable for the evaluation of Biot-Savart's law. In contrast, subtraction approaches produce potentials that are singular at the dipole position, and since Biot-Savart's law requires integration over the whole head volume, a direct application of this law leads to integrals over singular functions, which is numerically undesirable.

In this paper, we build on top of these observations and propose a new potential approach, called the \textit{localized subtraction} approach, that will alleviate the aforementioned problems. The construction of this approach is based on the point dipole being only singular in the source position. Hence we can subtract the singularity out of the problem formulation in a spatially confined manner, in contrast to previous subtraction approaches, where this subtraction was performed on the whole head domain.  This localization of the subtraction is the central step in reducing the computational burden of subtraction approaches.

Furthermore, we present an approach for handling the singular integral arising in the MEG forward problem when employing a subtraction-type potential approach. Building on top of the derivation of the classical Geselowitz formulas \cite{geselowitz_meg}, the singular part is extracted out of the integral and rewritten in terms of non-singular surface integrals.

In the methods section, we will derive the corresponding formulas for the EEG and MEG cases. In particular, we see that for the EEG forward problem the localized subtraction approach produces right-hand sides that are only nonzero for degrees of freedom in the close proximity of the dipole, leading to sparse right-hand sides. This renders the FEM right-hand sides efficiently computable. In the results section, we will then show that the localized subtraction approach yields a considerable increase in speed when compared to previous subtraction approaches. Furthermore, we will show the localized subtraction approach to be similarly accurate, and in many cases even more accurate, than other state-of-the-art potential approaches, such as the analytical subtraction approach \cite{beltrachini_analytic_subtraction} and the multipolar Venant approach \cite{vorwerk_venant}.

\section{Methods}
\subsection{The EEG forward problem}
\subsubsection{Definition of the problem}
Let $\Omega$ be the head domain and $\partial \Omega$ its boundary. In the EEG forward problem we are interested in determining the electric potential $u : \Omega \rightarrow \mathbb{R}$, or at least its values at some predetermined electrode positions, due to some given neural activity. As described in the introduction, in current practice this reduces to the computation of potentials generated by mathematical point dipoles. Let $\sigma : \Omega \rightarrow \mathbb{R}^{3\times 3}$ be the symmetric positive definite conductivity tensor. We now consider a point dipole at position $x_0 \in \Omega$ with moment $M \in \mathbb{R}^3$. Using a quasistatic approximation of Maxwell's equations \cite{Haem_MEG}, the electric potential $u$ due to the point dipole can be described by
\begin{align}
\divergence\left(\sigma\nabla u\right) &= \divergence\left( M \cdot \delta_{x_0}\right) &\text{on $\Omega$,} \label{pde_volume}\\
\langle \sigma \nabla u, \eta\rangle &= 0 &\text{on $\partial\Omega$,} \label{pde_boundary}
\end{align}
where $\delta_{x_0}$ is the Dirac distribution at $x_0$ and $\eta$ is the unit outer normal of the head domain.

\subsubsection{Deriving a weak formulation}
When trying to solve a partial differential equation using a finite element approach, the first step consists of deriving a suitable weak formulation. In doing so, we have to deal with the highly singular term $\divergence(M \cdot \delta_{x_0})$ arising from (\ref{pde_volume}). The defining feature of subtraction approaches is that they do not directly incorporate this term, or an approximation of it, into the weak formulation, but instead use (\ref{pde_volume}) and (\ref{pde_boundary}) to derive a weak formulation for the function $u - \varphi$, where $\varphi$ is some suitable function which ``subtracts'' the singularity out of the problem. We will now discuss how such a $\varphi$ can be constructed.

We assume the conductivity $\sigma$ to be constant on a neighborhood of $x_0$. Let $\sigma^\infty \in \mathbb{R}^{3\times3}$ be its value on this neighborhood. Now consider the equation
\begin{align}
\divergence\left(\sigma^\infty \nabla u^\infty\right) &= \divergence\left( M \cdot \delta_{x_0}\right) &\text{on $\mathbb{R}^3$,} \label{pde_whole_space}
\end{align}
which is the equivalent of (\ref{pde_volume}) in an unbounded homogeneous conductor. In this context, it is straightforward to derive analytical formulas for $u^\infty$ (e.g. \cite{drechsler_subtraction_fem, wolters_subtraction_method}). Since the right-hand side in (\ref{pde_whole_space}) is the same as in (\ref{pde_volume}), the idea is to use this $u^\infty$ to eliminate the singularity. 

One might now simply set $\varphi = u^\infty$. All previous subtraction methods used this approach, and this is what led to their high computational demand. The reason for this is that $u^\infty$ is non-zero almost everywhere. Hence, weak formulations, and in a second step FEM discretizations, derived from this approach contain integrals against functions that are non-zero on a large portion of the head domain \cite{drechsler_subtraction_fem, wolters_subtraction_method}. When evaluating these integrals for the FEM basis functions, a large number of them will be non-zero, leading to dense right-hand sides.

We instead propose to construct $\varphi$ in a different manner. Since $\divergence\left(M \cdot \delta_{x_0}\right)$ is only singular at the source position $x_0$, we suggest to only use the local distribution of $u^\infty$ around the source position to remove the singularity from the problem. We make this more precise. Let $\chi : \Omega \rightarrow \mathbb{R}$ be a function with $\chi = 1$ on a neighborhood of the source position $x_0$. The conceptual idea is to choose $\chi$ in such a way that it is only non-zero in a small region around $x_0$. 
Then $\chi \cdot u^\infty$ is also only non-zero in a small region around $x_0$, ultimately leading to a sparse right-hand side. Furthermore, it shares the local behavior of $u^\infty$ close to the source position and can hence also be used to eliminate the singularity from the problem formulation. 
Depending on whether one aims for a (classical) continuous Galerkin or a discontinuous Galerkin \cite{piastra_dgfem_eeg_meg} finite element discretization, one has to impose different regularity constraints on $\chi$. We focus on the continuous Galerkin case and refer to \cite{phd_nuessing} for the discontinuous Galerkin case. In the continuous Galerkin case, we further require $\chi$ to be continuous and to be contained in the Sobolev space $H^1(\Omega)$. We now make the approach $\varphi = \chi \cdot u^\infty$, and define the \textit{correction potential} $u^c$ by
\begin{equation}
u^c \coloneqq u - \chi \cdot u^\infty.
\end{equation}

We now derive a weak formulation for this correction potential. The central step is to understand how the multiplication of $u^\infty$ with $\chi$ impacts the distributional derivative. This is answered by the following lemma, which is proven in \ref{proof_of_distributional_derivative}.
\begin{lemma}
\label{distributional_derivative_u_infty_times_chi}
Let $\Omega^\infty \subset \Omega$ and $\chi \in H^1(\Omega)$ be such that $\chi$ is identical to $1$ on $\Omega^\infty$ and $x_0 \in \Omega^\infty$. Let $\widetilde{\Omega} \subset \Omega\setminus\Omega^\infty$ be a region such that $\partial\Omega^\infty$ is a subset of $\partial\Omega \cup \partial\widetilde{\Omega}$, and $\chi = 0$ outside $\Omega^\infty \cup \widetilde{\Omega}$ (see Figure \ref{patch_visualization}). Then for any test function $\phi \in C^\infty_c(\Omega)
\footnote{By $C^\infty_c(\Omega)$ we denote the space of all infinitely differentiable functions with compact support in $\Omega$.}$
 we have in the distributional sense
\begin{multline}
\divergence\left(\sigma^\infty \nabla \left(\chi \cdot u^\infty \right)\right)(\phi) 
 = \\ \divergence\left(M \cdot \delta_{x_0}\right)(\phi)
 - \int_{\widetilde{\Omega}} \langle\sigma^\infty \nabla \left(\chi \cdot u^\infty \right), \nabla \phi \rangle\, dV \\
 - \int_{\partial\Omega^\infty} \langle\sigma^\infty \nabla u^\infty , \eta\rangle \phi \, dS.
\end{multline}
\end{lemma}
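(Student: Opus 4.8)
The plan is to compute the distribution $\divergence(\sigma^\infty \nabla(\chi u^\infty))$ by unwinding the definition of distributional divergence, splitting the resulting integral over the support of $\chi$ into the piece on $\Omega^\infty$ (where $\chi \equiv 1$, so $\chi u^\infty = u^\infty$) and the piece on $\widetilde\Omega$, and then recognizing the $\Omega^\infty$-piece as essentially $\divergence(\sigma^\infty\nabla u^\infty)$ up to a boundary term picked up on $\partial\Omega^\infty$ via integration by parts. First I would write, for $\phi \in C^\infty_c(\Omega)$,
\[
\divergence\left(\sigma^\infty \nabla(\chi u^\infty)\right)(\phi) = -\int_\Omega \langle \sigma^\infty \nabla(\chi u^\infty), \nabla\phi\rangle\, dV,
\]
which is legitimate since $\chi u^\infty \in H^1(\Omega)$ (the product of the $H^1$ cutoff $\chi$ with the locally singular but locally $L^2$-gradient function $u^\infty$, away from $x_0$ it is smooth, near $x_0$ it equals $u^\infty$ whose gradient is $L^1_{loc}$ in $\mathbb R^3$ — one should note that $\nabla u^\infty$ decays like $|x-x_0|^{-2}$, which is integrable in three dimensions, so this pairing makes sense). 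Since $\support\chi \subset \overline{\Omega^\infty \cup \widetilde\Omega}$, this integral splits as
\[
-\int_{\Omega^\infty} \langle \sigma^\infty \nabla u^\infty, \nabla\phi\rangle\, dV \;-\; \int_{\widetilde\Omega} \langle \sigma^\infty \nabla(\chi u^\infty), \nabla\phi\rangle\, dV,
\]
using $\chi \equiv 1$ on $\Omega^\infty$. The second term is already exactly the second term appearing in the claim, so everything reduces to identifying the first term.

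Next I would handle the $\Omega^\infty$-integral. Away from the singularity $u^\infty$ is smooth and solves $\divergence(\sigma^\infty\nabla u^\infty)=0$ classically, so I want to integrate by parts on $\Omega^\infty$; the obstruction is the point singularity at $x_0 \in \Omega^\infty$. The standard device is to excise a small ball $B_\varepsilon(x_0)$, apply the divergence theorem on $\Omega^\infty \setminus B_\varepsilon(x_0)$, and send $\varepsilon \to 0$. This produces a boundary term on $\partial\Omega^\infty$, namely $\int_{\partial\Omega^\infty}\langle\sigma^\infty\nabla u^\infty,\eta\rangle\phi\, dS$ (with $\eta$ the appropriate outward normal), plus a boundary term on $\partial B_\varepsilon(x_0)$ whose limit, by the explicit form of $u^\infty$, reproduces the action of $\divergence(M\cdot\delta_{x_0})$ on $\phi$. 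Concretely, since $u^\infty$ is (up to the constant conductivity scaling) the dipole field, $-\int_{\partial B_\varepsilon}\langle\sigma^\infty\nabla u^\infty,\eta_\varepsilon\rangle\phi\, dS \to \divergence(M\cdot\delta_{x_0})(\phi) = -\langle M, \nabla\phi(x_0)\rangle$ as $\varepsilon\to 0$; this is precisely the defining property of $u^\infty$ via equation (\ref{pde_whole_space}), so I can invoke it rather than recompute the surface integral from scratch. Collecting terms gives
\[
-\int_{\Omega^\infty}\langle\sigma^\infty\nabla u^\infty,\nabla\phi\rangle\, dV = \divergence(M\cdot\delta_{x_0})(\phi) - \int_{\partial\Omega^\infty}\langle\sigma^\infty\nabla u^\infty,\eta\rangle\phi\, dS,
\]
and adding back the $\widetilde\Omega$-term yields the stated identity.

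The main obstacle is the careful treatment of the singularity at $x_0$: one must justify that $\chi u^\infty \in H^1(\Omega)$ so that the distributional divergence is represented by the bulk integral, and then control the $\varepsilon\to 0$ limit of the small-sphere surface integral, showing the $\nabla u^\infty$ flux through $\partial B_\varepsilon$ converges against a smooth $\phi$ to the dipole action (and that the $O(\varepsilon)$ volume contribution from $B_\varepsilon$ vanishes). A secondary technical point is the geometry: one must check that the hypotheses ($\partial\Omega^\infty \subset \partial\Omega \cup \partial\widetilde\Omega$ and $\chi=0$ outside $\Omega^\infty\cup\widetilde\Omega$) guarantee there is no stray boundary contribution from $\partial\widetilde\Omega$ or from the part of $\partial\Omega^\infty$ lying on $\partial\Omega$ — on $\partial\Omega$ the test function $\phi$ has compact support in the open set $\Omega$ so it vanishes there, and on the shared interface $\partial\Omega^\infty \cap \partial\widetilde\Omega$ the boundary term from the $\Omega^\infty$ side is exactly the one retained in the statement while continuity of $\chi u^\infty$ across that interface prevents a jump term. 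Everything else is routine integration by parts.
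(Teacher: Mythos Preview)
Your argument has a genuine gap at the very first step. You claim $\chi u^\infty \in H^1(\Omega)$ and write
\[
\divergence\left(\sigma^\infty \nabla(\chi u^\infty)\right)(\phi) = -\int_\Omega \langle \sigma^\infty \nabla(\chi u^\infty), \nabla\phi\rangle\, dV,
\]
justifying this by ``$\nabla u^\infty$ decays like $|x-x_0|^{-2}$, which is integrable in three dimensions''. But $u^\infty$ is the \emph{dipole} potential, not a monopole: it is homogeneous of degree $-2$ in $x-x_0$, so $|u^\infty|\sim |x-x_0|^{-2}$ and $|\nabla u^\infty|\sim |x-x_0|^{-3}$, which is \emph{not} locally integrable in $\mathbb{R}^3$. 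Consequently $\chi u^\infty \notin H^1(\Omega)$ (indeed $u^\infty\notin L^2_{loc}$ either), the right-hand side above is not a convergent Lebesgue integral, and the subsequent split over $\Omega^\infty$ and $\widetilde\Omega$ inherits the same divergence. The paper is explicit about this point: it observes that $\chi u^\infty$ is only locally integrable, moves \emph{both} derivatives onto $\phi$ via the distributional definition, and then writes $\chi = 1 + (\chi-1)$. The $1$-piece gives $\divergence(\sigma^\infty\nabla u^\infty)(\phi)=\divergence(M\cdot\delta_{x_0})(\phi)$ directly from the defining PDE, and the $(\chi-1)$-piece vanishes on $\Omega^\infty$, so all remaining integrations by parts take place on $\Omega\setminus\Omega^\infty$ where $u^\infty$ is smooth and no singularity is ever touched.

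Your ball-excision idea is not wrong in spirit and could be made to work, but it has to be deployed earlier: start from $\int_\Omega \chi u^\infty\,\divergence(\sigma^\infty\nabla\phi)\,dV$ (legitimate since $\chi u^\infty\in L^1_{loc}$), write this as $\lim_{\varepsilon\to 0}\int_{\Omega\setminus B_\varepsilon(x_0)}(\cdots)$ by dominated convergence, and only then integrate by parts on the punctured domain. As written, your first displayed identity is the step that fails.
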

Here $\eta$ is the unit outer normal of $\Omega^\infty$. Note in particular that we have $\divergence\left(M \cdot \delta_{x_0}\right)$ on the right-hand side, and hence can use $\chi \cdot u^\infty$ to cancel the singularity in \eqref{pde_volume}.

\begin{figure}[!t]
\centerline{\includegraphics{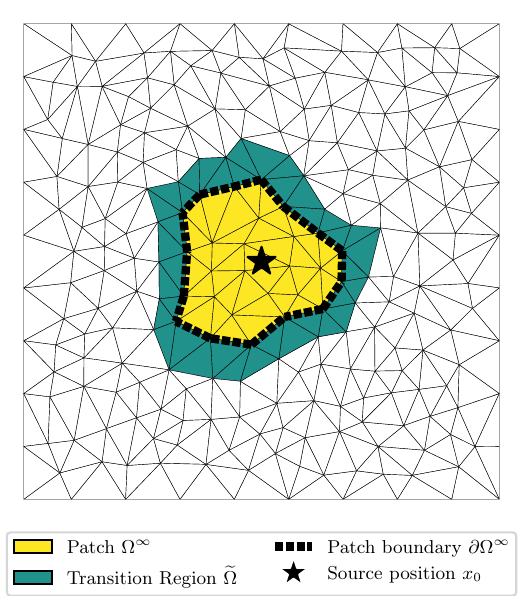}}
\caption{2-dim. visualization of a realization of the construction described in lemma \ref{distributional_derivative_u_infty_times_chi}. The whole square is $\Omega$. In the yellow region we have $\chi = 1$, in the white region we have $\chi = 0$.}
\label{patch_visualization}
\end{figure}

We call $\Omega^\infty$ the \textit{patch}. Furthermore, we call $\widetilde{\Omega}$ the \textit{transition region}, as $\chi$ transitions from 1 to 0 here. We want to emphasize that our definition of $\Omega^\infty$ differs from the one given in \cite{drechsler_subtraction_fem, wolters_subtraction_method}. In particular, we do not assume the conductivity $\sigma$ to be constant on the whole patch $\Omega^\infty$. Instead, we just assume that $\Omega^\infty$ contains a neighborhood of $x_0$ where the conductivity is constant. 

Now let $\sigma^c \coloneqq \sigma - \sigma^\infty$. Multiplying \eqref{pde_volume} by a test function, applying partial integration together with the boundary condition \eqref{pde_boundary}, inserting the decomposition 
\begin{equation}
\sigma \nabla u = \sigma \nabla u^c + \sigma^c \nabla \left (\chi \cdot u^\infty \right) + \sigma^\infty \nabla \left(\chi \cdot u^\infty \right)
\end{equation}
and using lemma \ref{distributional_derivative_u_infty_times_chi}, we arrive at the following.
Define a bilinear form $a : H^1(\Omega) \times H^1(\Omega) \rightarrow \mathbb{R}$ by
\begin{equation}
a(w, v) = \int_{\Omega} \langle \sigma \nabla w, \nabla v \rangle\, dV
\end{equation}
and a linear form $l : H^1(\Omega) \rightarrow \mathbb{R}$ by
\begin{multline}
l(v) = - \int_{\widetilde{\Omega}} \langle \sigma \nabla \left(\chi \cdot u^\infty \right), \nabla v\rangle \, dV \\
- \int_{\partial \Omega^\infty} \langle\sigma^\infty \nabla u^\infty, \eta \rangle v \, dS
- \int_{\Omega^\infty} \langle \sigma^c \nabla u^\infty, \nabla v \rangle \, dV,
\label{localized_subtraction_rhs_main_paper}
\end{multline}
where $\eta$ is the unit outer normal of $\Omega^\infty$.
Then the \textit{continuous Galerkin localized subtraction approach} is given by the problem of finding $u^c \in H^1(\Omega)$ such that for all $v \in H^1(\Omega)$ we have $a(u^c, v) = l(v)$. For a more detailed derivation, we refer to \ref{weak_formulation_correct}.

It is easy to see that $l$ vanishes on constant functions. By standard finite element theory, described e.g. in \cite{wolters_subtraction_method}, one can thus see that the set of solutions of this problem is of the form
\begin{equation}
u^c_0 + \mathbb{R} \cdot 1,
\end{equation} reflecting the arbitrary choice of a reference electrode. We have thus derived a weak formulation and have seen that it produces solutions in the expected form. Also note that the corresponding family of solutions of \eqref{pde_volume} and \eqref{pde_boundary} given by
\begin{equation}
u^c_0 + \chi \cdot u^\infty + \mathbb{R} \cdot 1
\end{equation}
does not depend on the choice of $\chi, \Omega^\infty$ and $\widetilde{\Omega}$, as is to be expected. Proofs of these statements can be found in \ref{well_defined_proof}.
Furthermore, we would like to emphasize that the choice $\chi = 1$, $\Omega^\infty = \Omega$ and $\widetilde{\Omega} = \emptyset$ leads to the classical subtraction approach as described in \cite{thevenet_subtraction, van_den_broek_subtraction, awada_subtraction, marin_subtraction, schimpf_dipole_models, wolters_subtraction_method, drechsler_subtraction_fem, engwer_dg_fem_eeg, beltrachini_projected_subtraction, beltrachini_analytic_subtraction}. Hence the classical subtraction approach can be seen as a special case of the localized subtraction approach.

\subsubsection{FEM discretization}
\label{fem_discretization}
To use \eqref{localized_subtraction_rhs_main_paper} in a computational algorithm, we need a way to construct $\chi, \Omega^\infty$ and $\widetilde{\Omega}$. We want to propose a strategy for this. Since we want to use a FEM approach, we can assume that a mesh of the head domain is given. We then construct $\Omega^\infty$ and $\widetilde{\Omega}$ using so called \textit{vertex extensions}. If we are given a certain subset $\mathcal{T}$ of mesh elements, we define the vertex extension $\mathcal{T}'$ of $T$ to be the set of all mesh elements that share at least one vertex with an element from $\mathcal{T}$. Note that $\mathcal{T} \subseteq \mathcal{T}'$. Now let $x_0 \in \Omega$ be a given source position. We then take the mesh element containing $x_0$ and perform a certain number of vertex extensions. We take the result of this as the patch $\Omega^\infty$. We then perform an additional vertex extension on the patch and define $\widetilde{\Omega}$ to be the union over all elements contained in the vertex extension of the patch that are not contained in the patch itself. We then choose $\chi$ as a linear combination of FEM basis functions which takes the value $1$ on every patch vertex and $0$ on the remaining vertices. Note that in the common cases of piece-wise linear basis functions on tetrahedra and piece-wise multilinear basis functions on hexahedra this already uniquely determines $\chi$. The result of this construction is illustrated in Figure \ref{patch_visualization}. Note that in this figure two vertex extensions were used to construct the patch $\Omega^\infty$.

This construction leaves the question of how many vertex extensions should be used to construct the patch. From a computational perspective, it is beneficial to perform as few vertex extensions as possible since larger patches lead to a more costly integration during the FEM right-hand side assembly. Furthermore, for larger patches the accuracy of the approximation becomes more dependent on the underlying mesh, since e.g. at conductivity jumps $u^c = u - \chi \cdot u^\infty$ might exhibit an intricate structure, requiring a high resolution to properly resolve. On the other hand, choosing small patches might also lead to suboptimal approximations, since in this case $\chi \cdot u^\infty$ is only non-zero in a small vicinity of the source, which forces the singular behavior of the potential near the dipole onto the correction potential $u^c$. Hence it becomes more difficult to approximate $u^c$ by finite element trial functions, thus degrading the performance of the approach. In the Results section, we will investigate how we can optimally balance these effects.

Finally, we want to elaborate on the computation of the FEM right-hand side. Let $K$ be a mesh element and $F$ a face of a mesh element. When the right-hand side assembly is performed element-wise, as is typical in the FEM, and $\varphi$ is a FEM test function, we need to compute integrals of the form
\begin{align}
I_T =& \int_K \langle \sigma \nabla \left(\chi \cdot u^\infty \right), \nabla \varphi\rangle \, dV \label{transition_integral_eeg_main_paper}\\
I_S =& \int_F \langle \sigma^\infty \nabla u^\infty , \eta \rangle \varphi \, dS, \label{surface_integral_eeg_main_paper}\\
I_P =& \int_K \langle\sigma^c \nabla u^\infty, \nabla \varphi \rangle \, dV, \label{patch_integral_eeg_main_paper}
\end{align} 
We call $I_T$ a \textit{transition integral}, $I_S$ a \textit{surface integral} and $I_P$ a \textit{patch integral}.

In the case of isotropic $\sigma^\infty$ and piecewise affine trial functions on tetrahedral meshes, analytical expressions for the patch and surface integrals were derived in \cite{beltrachini_analytic_subtraction}, yielding the analytical subtraction approach. We slightly adapted the corresponding formulas to arrive at a more stable implementation. Furthermore, building on top of \cite{wilton_analytic_formulas, graglia_analytic_formulas}, we extended the derivation given in \cite{beltrachini_analytic_subtraction} to compute an analytical expression for the transition integral, enabling a completely analytic computation of the localized subtraction EEG right-hand side. We refer to \ref{analytical_expressions} for an in-depth derivation of these formulas.

If the mesh is not tetrahedral, or the conductivity in the neighborhood of the source is not isotropic, or the trial functions are not affine, we resort to numerical integration using the quadrature rules implemented in the DUNE framework \cite{bastian_dune_framework}, on which the DUNEuro implementation is built \cite{Schrader_duneuro}. A discussion on how to choose suitable integration orders can be found in \ref{integration_order_chapter}.

\subsection{The MEG forward problem}
\subsubsection{Definition of the problem}
In magnetoencephalography, one measures magnetic fluxes through magnetometers. To simulate magnetic fluxes, values of the form
\begin{equation}
\int_F \langle B, \eta\rangle\, dS, \label{meg_sensor_measurement}
\end{equation}
need to be approximated, where $B$ is the magnetic field, $F$ is some surface enclosed by a pickup coil and $\eta$ is the normal to this surface (see e.g. \cite{Haem_MEG, handbook_of_neural_activity_measurement}). In FEM approaches, this is typically done using some quadrature rule on the integral \eqref{meg_sensor_measurement}, and then using Biot-Savart's law to compute the magnetic field $B$ at the quadrature points \cite{roth_meg_integration_formulas}. This law states that, for the electric current $j$ in the volume conductor $\Omega$, the magnetic field at some point $x$ outside $\Omega$ is given by
\begin{equation}
B(x) = \frac{\mu_0}{4\pi} \int_\Omega j(y) \times \frac{x - y}{\|x - y\|^3} \, dV(y), \label{biot_savart_law}
\end{equation}
where $\mu_0$ is the magnetic permeability of the vacuum \cite{Haem_MEG, handbook_of_neural_activity_measurement}.
In bioelectromagnetism, the current is typically split into the so-called \textit{primary current $j^P$} generated by the neural activity, and the passive volume currents generated by the electric field and hence given by $- \sigma\nabla u$, where $u$ is the electric potential \cite{Haem_MEG, handbook_of_neural_activity_measurement}. Since the magnetic field depends linearly on the primary current, we can again concentrate on the case of a point dipole $j^P = M \cdot \delta_{x_0}$. Inserting the splitting $j = j^P - \sigma \nabla u$ into \eqref{biot_savart_law}, we get a splitting $B = B^P + B^S$ into the so-called \textit{primary magnetic field $B^P$} and \textit{secondary magnetic field $B^S$}. Here, the primary field can be efficiently computed using analytical expressions and hence poses no problem. The difficult part of the MEG forward problem is thus to compute the integral
\begin{equation}
\int_{\Omega} \sigma(y) \nabla u(y) \times  \frac{x - y}{\|x - y\|^3} \, dV(y). \label{secondary_magnetic_field}
\end{equation}

\subsubsection{The boundary subtraction approach}
To compute the integral in \eqref{secondary_magnetic_field}, we need the gradient of the electric potential $u$, which we can e.g. approximate using a FEM approach. With an increasing distance to the source, the magnitude of the volume current decreases. Hence the behavior of $\sigma \nabla u$ close to the source position has a comparatively high contribution in the integral \eqref{secondary_magnetic_field}, and it is of interest to ensure that the FEM approximation is accurate in the vicinity of the dipole. The local behavior of the numerical solution depends on how the FEM right-hand side was derived from the dipole, which means it depends on the potential approach. Since the current state-of-the-art direct approaches, such as the Venant approach, are based on substituting the dipole with a more easily manageable object, their local behavior around the source position by construction deviates from that of a true dipole. Subtraction methods in contrast incorporate the mathematical point dipole into the problem formulation and can thus be expected to produce highly accurate approximations of the true current density in the proximity of the dipole. This is illustrated in Figure \ref{flux_loc_sub}, where we see that the current computed using the localized subtraction approach has the expected dipolar pattern even in the close vicinity of the source position. Hence, when employing a dipolar source model for the neural activity, subtraction approaches are a natural choice for computing currents and their resulting magnetic fields.

\begin{figure}[!t]
\centerline{\includegraphics[scale = 0.2, trim = 2cm 2.5cm 1.5cm 2.0cm, clip]{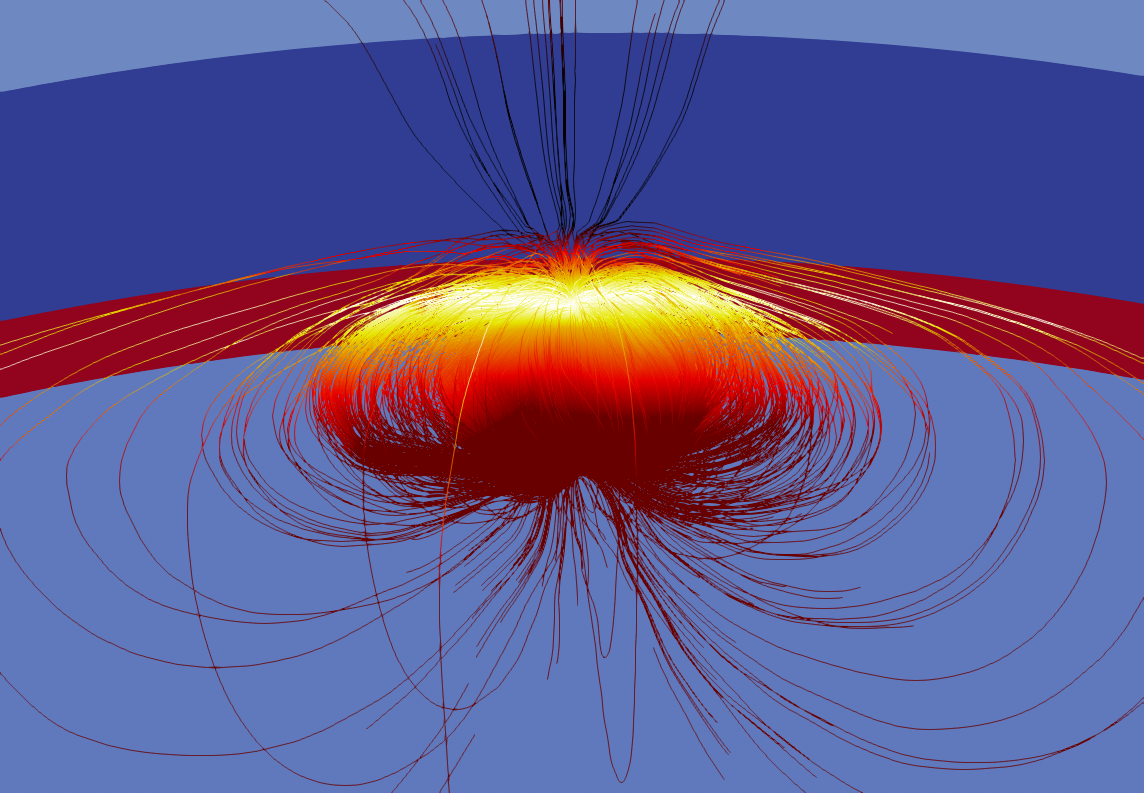}}
\caption{Visualization of the volume currents as computed by the localized subtraction approach. The colors of the lines indicate the magnitude of the current and the colors in the background show the conductivity of the volume conductor. The numerical simulation was performed in the multilayer sphere model \textit{mesh\_init} shown in Figure \ref{meshes_refinement}\,(a), for a radial dipole with a 1 mm distance to the conductivity jump. Details on this mesh can be found in the Results section. The current was visualized by generating 5000 points in a sphere of radius 2.5 mm around the dipole position and applying a Runge-Kutta method, using the streamtrace filter implemented in the ParaView software \cite{ahrens_paraview}.} 
\label{flux_loc_sub}
\end{figure}

This again introduces a singularity into the problem, as subtraction approaches produce approximations of the form $u_h = u_h^c + \chi \cdot u^\infty$, where $u_h^c$ is the result of the finite element discretization. Hence we need to compute
\begin{equation}
\int_\Omega \sigma(y) \nabla \left(u^c_h + \chi u^\infty\right)(y) \times \frac{x - y}{\|x - y\|^3}\, dV(y). \label{singular_formulation_biot_savart}
\end{equation}
Since $u^c_h$ is a linear combination of FEM basis funcitons, it poses no problem for the integration, which can be performed efficiently using a transfer matrix approach \cite{drechsler_subtraction_fem, gencer_transfer_matrices, wolters_subtraction_method}. The function $\nabla \left(\chi \cdot u^\infty\right)$ on the other hand contains a singularity at the source position, so that numerical integration is undesirable. One way to deal with this is projecting either $u^\infty$ or its gradient into a space of easily integrable functions, similar to what is done in \cite{wolters_subtraction_method} or \cite{beltrachini_projected_subtraction}. Since especially in the vicinity of the singularity the function $u^\infty$ is quite different from its projection, this can introduce a significant error, as was observed in \cite{beltrachini_projected_subtraction} in the EEG context. Under the assumption of isotropic $\sigma^\infty$ we can instead eliminate the singularity from \eqref{singular_formulation_biot_savart} altogether. For a set $A$, let $1_A$ denote its indicator function, which is 1 on $A$ and $0$ everywhere else. Then, if $\Omega^\infty$ and $\widetilde{\Omega}$ are the patch and transition region, by construction, we have the decomposition
\begin{multline}
\sigma \nabla\left(\chi u^\infty\right) = \\ 
\sigma \nabla \left(\chi u^\infty\right) \cdot 1_{\widetilde{\Omega}}
+ \sigma^c \nabla u^\infty \cdot 1_{\Omega^\infty}
+ \sigma^\infty \nabla u^\infty \cdot 1_{\Omega^\infty}.
\end{multline}
Note that $\sigma \nabla \left(\chi u^\infty\right) \cdot 1_{\widetilde{\Omega}}$ and $\sigma^c \nabla u^\infty \cdot 1_{\Omega^\infty}$ do not contain a singularity, since $1_{\widetilde{\Omega}}$ and $\sigma^c$ vanish on an environment of the singularity. Thus the singularity of the integrand in \eqref{singular_formulation_biot_savart} can be split off into $\sigma^\infty \nabla u^\infty \cdot 1_{\Omega^\infty}$. But this expression can be handled by mimicking the derivation of the classical Geselowitz formula from \cite{geselowitz_meg}. This idea was introduced by \cite{phd_piastra_meg}, where it is used for MEG forward simulations in the context of the full subtraction approach \cite{drechsler_subtraction_fem}. More concretely, we have the following.
\begin{lemma}
Let $\eta$ be the unit outer normal of $\Omega^\infty$. We then have
\begin{multline}
\int_{\Omega^\infty} \sigma^\infty \nabla u^\infty(y) \times \frac{x - y}{\|x - y\|^3} \, dV(y)
=\\ \int_{\partial \Omega^\infty} \sigma^\infty u^\infty(y) \cdot \eta(y) \times \frac{x - y}{\|x - y\|^3} \, dS(y).
\end{multline}
\end{lemma}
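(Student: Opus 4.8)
The plan is to turn the volume integral into a surface integral over $\partial\Omega^\infty$ by means of an elementary vector-calculus identity and the divergence theorem, after isolating the singularity of $u^\infty$ at $x_0$. Write $G(y)\coloneqq\frac{x-y}{\|x-y\|^{3}}$ for the Biot--Savart kernel. Two observations set this up. First, because the sensor point $x$ lies outside $\Omega$ and hence outside $\overline{\Omega^\infty}$, the field $G$ is smooth on a neighborhood of $\overline{\Omega^\infty}$ and is curl-free there, since $G(y)=\nabla_y\|x-y\|^{-1}$. Second, $u^\infty$ is smooth on $\overline{\Omega^\infty}\setminus\{x_0\}$ — in particular on $\partial\Omega^\infty$, as $x_0$ lies in the interior of the patch — and has the explicit dipole form $u^\infty(y)=\frac{1}{4\pi\sigma^\infty}\frac{\langle M,\,y-x_0\rangle}{\|y-x_0\|^{3}}$.

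Concretely, I would fix a small $\varepsilon>0$, set $\Omega_\varepsilon\coloneqq\Omega^\infty\setminus\overline{B_\varepsilon(x_0)}$, and apply on $\overline{\Omega_\varepsilon}$ (where $u^\infty G$ is smooth) the identity $\nabla\times(u^\infty G)=\nabla u^\infty\times G+u^\infty(\nabla\times G)=\nabla u^\infty\times G$ together with the curl form of the divergence theorem, $\int_{V}\nabla\times F\,dV=\int_{\partial V}n\times F\,dS$ ($n$ the outer unit normal). Since $n=\eta$ on $\partial\Omega^\infty$, while on $\partial B_\varepsilon(x_0)$ the outer normal $\eta_\varepsilon$ of $\Omega_\varepsilon$ points towards $x_0$, this gives
\[
\sigma^\infty\!\int_{\Omega_\varepsilon}\!\nabla u^\infty\times G\,dV=\int_{\partial\Omega^\infty}\!\sigma^\infty u^\infty\,(\eta\times G)\,dS+\sigma^\infty\!\int_{\partial B_\varepsilon(x_0)}\!u^\infty\,(\eta_\varepsilon\times G)\,dS .
\]
It then remains to pass to the limit $\varepsilon\to0$.

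The limit of the ball term is where the real content — and the one delicate point — sits. Parametrizing $\partial B_\varepsilon(x_0)$ by $y=x_0+\varepsilon\omega$ with $\omega\in S^{2}$, one has $u^\infty(x_0+\varepsilon\omega)=\frac{\langle M,\omega\rangle}{4\pi\sigma^\infty\varepsilon^{2}}$, $\eta_\varepsilon=-\omega$, $dS=\varepsilon^{2}\,d\omega$, and $G(x_0+\varepsilon\omega)\to G(x_0)$, so the ball term converges to $-\frac{1}{4\pi}\bigl(\int_{S^{2}}\langle M,\omega\rangle\,\omega\,d\omega\bigr)\times G(x_0)=-\tfrac13\,M\times G(x_0)$, using $\int_{S^{2}}\langle M,\omega\rangle\,\omega\,d\omega=\tfrac{4\pi}{3}M$. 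Hence this surface contribution does \emph{not} vanish; the way it is reconciled is that, since $\|\nabla u^\infty\|\sim\|y-x_0\|^{-3}$ is not locally integrable, the left-hand side of the Lemma cannot be read as an ordinary Lebesgue integral but must be understood distributionally, and the distribution $\sigma^\infty\nabla u^\infty$ carries — besides its pointwise part away from $x_0$ — the classical point mass $\tfrac13\,M\,\delta_{x_0}$ coming from the distributional gradient of a dipole potential. Pairing that point mass against $\times G$ contributes $+\tfrac13\,M\times G(x_0)$, which cancels the ball term exactly, and the Lemma follows. The remaining steps are routine: checking that $\lim_{\varepsilon\to0}\int_{\Omega_\varepsilon}\nabla u^\infty\times G\,dV$ exists (the odd leading part cancels on spheres), that the $O(\varepsilon)$ remainder from expanding $G$ about $x_0$ drops out in the limit, and the orientation bookkeeping for $\partial\Omega_\varepsilon$. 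One could also note that the $S^{2}$ integral above is precisely the computation underlying the quoted distributional form of $\nabla u^\infty$, so there is really only one computational ingredient.
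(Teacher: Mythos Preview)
The paper does not actually prove this lemma itself; it merely cites an external reference with the remark that those authors ``take special care to properly handle the singularity at the source position.'' Your argument is precisely that careful treatment, carried out in the standard Geselowitz fashion: excise $B_\varepsilon(x_0)$, use $\nabla\times(u^\infty G)=\nabla u^\infty\times G$ together with the curl form of the divergence theorem, and pass to the limit. So your approach coincides with what the paper defers to.

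Your explicit handling of the non-vanishing ball contribution $-\tfrac13\,M\times G(x_0)$, and its reconciliation via the distributional reading of $\sigma^\infty\nabla u^\infty$ (which carries the point mass $\tfrac13\,M\,\delta_{x_0}$), is the genuine content here and is correct. Since $\|\nabla u^\infty\|\sim r^{-3}\notin L^1_{\mathrm{loc}}$, the volume integral in the lemma has no meaning as a Lebesgue integral, and the principal-value reading alone would leave exactly this residual term; the distributional interpretation is therefore forced. The paper leaves this interpretive point implicit, so your proposal is in fact more informative than the paper's treatment.
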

A proof of this lemma can be found in \cite{dassios_geselowitz_formula}, where the authors take special care to properly handle the singularity at the source position. 

In total, we thus arrive at the following approach for computing the secondary magnetic field.
\begin{align}
\begin{split}
-\frac{4\pi}{\mu_0}B^S(x) =&
\int_{\Omega} \sigma(y) \nabla u^c(y) \times \frac{x - y}{\|x - y\|^3} \, dV(y)\\
+& \int_{\Omega^\infty} \sigma^c(y) \nabla u^\infty(y) \times \frac{x - y}{\|x - y\|^3}\, dV(y)\\
+& \int_{\partial \Omega^\infty} \sigma^\infty u^\infty(y) \cdot \eta(y) \times \frac{x - y}{\|x - y\|^3} \, dS(y)\\
+& \int_{\widetilde{\Omega}} \sigma(y) \nabla\left(\chi \cdot u^\infty\right)(y) \times \frac{x - y}{\|x - y\|^3} \, dV(y).
\end{split}
\label{boundary_subtraction}
\end{align}
Note in particular that the right-hand side of \eqref{boundary_subtraction} contains no singularity. We call this the \textit{boundary subtraction approach} for the MEG forward problem since the key step consists in rewriting the singular volume integral as a boundary integral.

The first summand in \eqref{boundary_subtraction} can be assembled using a standard MEG transfer matrix approach. Similar to the EEG case, we call the second summand \textit{patch flux}, the third \textit{surface flux}, and the fourth \textit{transition flux}. These can be assembled in an element-wise manner using numerical integration, which raises the question of how to choose suitable integration orders. A corresponding discussion can be found in \ref{integration_order_chapter}.

\subsection{Implementation}
All formulas derived in the Methods section were implemented into the DUNEuro toolbox \cite{Schrader_duneuro}\footnote{\url{https://gitlab.dune-project.org/duneuro/duneuro/-/tree/feature/localized-assembler}}, an open-source C++ toolbox for neuroscience applications built upon the DUNE framework \cite{bastian_early_dune_paper, bastian_dune_framework}.

\subsection{Experimental setup}
For the numerical simulations in this paper, we used a four-layer sphere model, where the layers represent the brain, cerebrospinal fluid (CSF), skull, and skin. The model parameters are described in table \ref{sphere_parameters}. In multilayer sphere models one can derive analytical solutions for the EEG \cite{demunck_analytical_eeg} and MEG \cite{sarvas_analytical_meg} forward problems, which we compare against our numerical solutions.

\begin{table}
\caption{Multi-layer sphere parameters}
\label{sphere_parameters}
\begin{tabular}{|p{71.5pt}|p{71.5pt}|p{71.5pt}|}
\hline
Compartment & Conductivity & Radius\\ \hline
Brain 		& 0.33 S/m 		& \parbox{\widthof{00}}{0} - 78 mm\\
CSF 			& 1.79 S/m 		&78 - 80 mm\\
Skull 		& 0.01 S/m 		& 80 - 86 mm\\
Skin 			& 0.43 S/m 		& 86 - 92 mm\\ \hline
\end{tabular}
\end{table}

To employ a finite element approach, one needs to construct a mesh. It is well known that the mesh construction itself can already introduce a bias in favor of certain potential approaches \cite{lew_potential_approach_comparison, de_munck_wolters_clerc_handbook_neural_activity}. More concretely, in \cite{lew_potential_approach_comparison} it was found that meshes that concentrate their nodes in the proximity of conductivity jumps favor subtraction approaches, while meshes with evenly distributed nodes favor direct approaches, such as the multipolar Venant approach. 

This can be explained by the fact that the subtraction approach does not solve for the potential itself, but for the difference in the head potential and the potential in an infinite homogeneous medium. To produce accurate results, the FEM needs a high resolution in those areas where the function to be approximated shows a complicated behavior. In particular, in the proximity of conductivity jumps we expect the head potential to deviate from the potential in a homogeneous medium. Hence we expect their difference to exhibit a quite nontrivial behavior at conductivity jumps, which needs a high mesh resolution to properly capture. In the brain layer, on the other hand, we expect the homogeneous potential and the head potential to be quite similar, and hence, as long as the dipole is not too close to a conductivity jump, a comparatively low brain resolution is required to achieve good numerical accuracy for the subtraction approach.

The Venant approach, on the other hand, relies on good node positions for the placement of monopolar loads, and can hence utilize meshes that are more refined inside the brain compartment.

For the localized subtraction approach, similar considerations as for the subtraction approach apply. But in contrast to the subtraction approach, which solves for the whole difference $u - u^\infty$, the localized subtraction approach solves for $u - \chi \cdot u^\infty$, which only differs from $u$ in the proximity of the source. For dipoles with a large distance to the conductivity jump, all potential approaches generally produce highly accurate results, and hence the mesh structure is mainly relevant for the sources close to a conductivity jump. We thus expect the localized subtraction approach to produce accurate results as long as the mesh has a high resolution at the conductivity jumps close to the source positions.

These considerations show the difficulty in choosing a ''fair'' mesh for comparing the accuracies of the different potential approaches. We will use three different meshes, build in such a way as to provide favorable circumstances for the different potential approaches introduced before (see Figure \ref{meshes_refinement}). First, a tetrahedral mesh consisting of 787k nodes was constructed using Gmsh \cite{gmsh_paper} (Figure \ref{meshes_refinement}\,(a)). This mesh has a high concentration of nodes in the CSF and skull compartments. We then used the UG software \cite{bastian_ug_grid} to generate two more meshes from this initial mesh, one by refining the brain compartment (benefitting the Venant approach, Figure \ref{meshes_refinement}\,(b)) and one by refining the skin compartment (benefitting the subtraction approach, Figure \ref{meshes_refinement}\,(c)). In the following, we call the mesh in Figure \ref{meshes_refinement}\,(a) \textit{mesh\_init}, the one in Figure \ref{meshes_refinement}\,(b) \textit{mesh\_brain} and the one in Figure \ref{meshes_refinement}\,(c) \textit{mesh\_skin}.

\begin{figure}
\includegraphics[scale = 0.16]{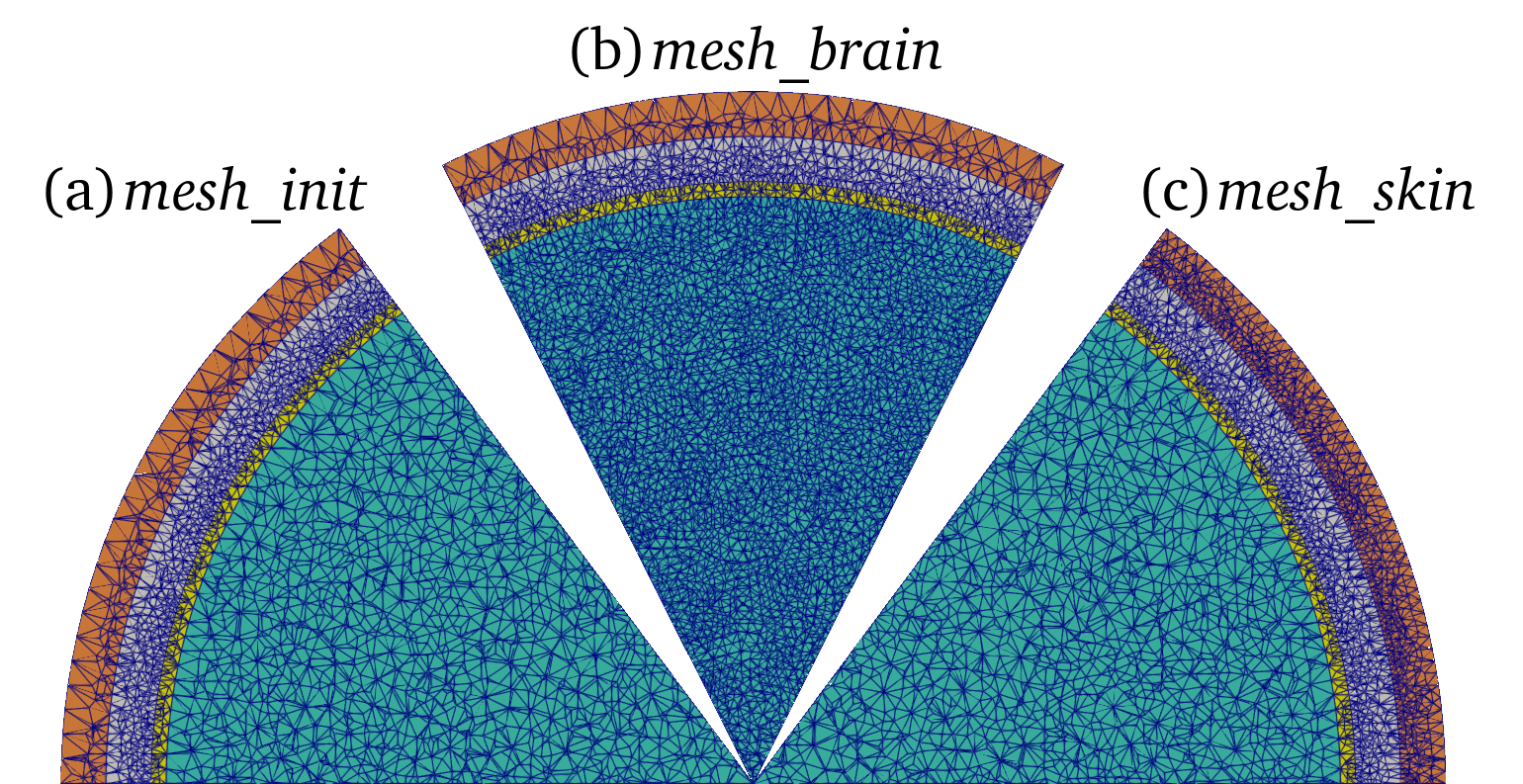}
\caption{Circular sections of the meshes used in the numerical tests. (a) shows the initial mesh with a high concentration of nodes in the CSF and skull compartment. (b) shows the mesh from (a) after refining the brain compartment. (c) shows the mesh from (a) after refining the skin compartment. Mesh (a) consists of ca. 800k nodes, mesh (b) consists of ca. 1.8 million nodes and mesh (c) consists of ca. 1.3 million nodes.}
\label{meshes_refinement}
\end{figure}

For the EEG forward problem we used 200 electrode positions on the sphere surface, and for the MEG forward problem we used 256 coil positions at a distance of 110 mm to the sphere center. The electrode positions and the coil positions respectively were chosen to be approximately uniformly distributed on their corresponding sphere. We always considered dipoles in the brain compartment, which we grouped by \textit{eccentricity}, which we computed as the distance of the dipole from the center of the sphere divided by the radius of the brain compartment. Based on physiological reasoning, \cite{piastra_dgfem_eeg_meg} argues that sources with a distance of 1-2 mm from the CSF compartment are the most relevant for the generation of neural electromagnetic fields. We will thus particularly focus on eccentricities inside this range. Furthermore, it is well known that the numerical simulation of potentials and fluxes becomes more difficult for sources close to a conductivity jump, as is e.g. noted in \cite{wolters_subtraction_method}. It can now happen, e.g. due to segmentation inaccuracies, that sources are placed at a distance of below 1 mm to a conductivity jump. To also cover these extreme cases, we test sources up to an eccentricity of 0.99, corresponding to a distance of about 0.8 mm to the conductivity jump.

\section{Results and Discussion}
\subsection{Patch construction}
\label{patch_construction_subsection}
We first want to investigate the number of vertex extensions one should perform during the construction of the patch for the localized subtraction approach (see section \ref{fem_discretization} and Figure \ref{patch_visualization}). We generated 1000 tangential and 1000 radial dipoles at an eccentricity of $0.99$, each approximately uniformly distributed on the corresponding sphere. For each of the three meshes shown in Figure \ref{meshes_refinement}, we then performed an increasing number of vertex extensions for each dipole and computed the corresponding localized subtraction EEG FEM solution. We also computed the FEM solutions for the analytical subtraction approach from \cite{beltrachini_analytic_subtraction} for comparison. Once the numerical results were computed, the relative error with respect to the analytical solution at the electrode positions was calculated. The results for the tangential dipoles and \textit{mesh\_init} are shown in Figure \ref{vertex_extension_tangential_comparison}.

\begin{figure}[h]
\centerline{\includegraphics[trim = 0.0cm 0cm 0cm 0cm]{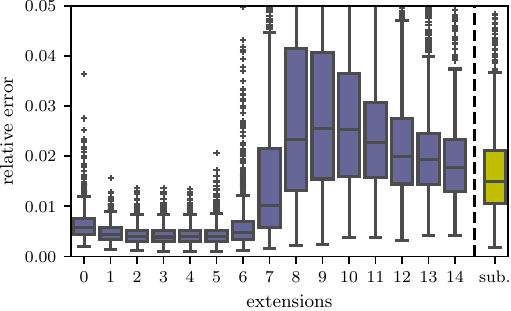}}
\caption{Relative error in the EEG case for 1000 tangential dipoles at 0.99 eccentricity for different numbers of vertex extensions during patch construction, computed using \textit{mesh\_init} (see Figure \ref{meshes_refinement}\,(a)). The rightmost yellow boxplot shows the errors when employing the analytical subtraction approach from \cite{beltrachini_analytic_subtraction}.} 
\label{vertex_extension_tangential_comparison}
\end{figure}

We see that when going from $0$ extensions, meaning the patch consists only of the element containing the dipole, to two extensions, the error decreases. It then stays almost constant up to 5 extensions. The error then significantly increases from 6 to 8 extensions, and then gradually approaches the error of the analytical subtraction approach. Note that if enough vertex extensions are performed to cover the whole volume conductor, the error of the localized subtraction and  the analytical subtraction approach would be identical. Furthermore, radial dipoles show similar behavior.

The initial decrease can be explained by the singular behavior in the correction potential, which comes from the need to approximate the dipole potential in the close proximity of the singularity. Furthermore, for \textit{mesh\_init}, 6-8 extensions are the point where the patches for dipoles at an eccentricity of $0.99$ start to grow into the skin compartment. As seen in Figure \ref{meshes_refinement}\,(a), the skin compartment of this mesh has a comparatively low resolution. This makes an approximation of $u - \chi \cdot u^\infty$,  which might exhibit quite a complicated behavior at conductivity jumps, difficult. Indeed, when performing the same experiment for \textit{mesh\_skin}, we see that the error for larger patches significantly decreases, while the error for smaller patches barely changes (see Figure \ref{vertex_extension_comparison_skin_refined_tangential_eeg}). In contrast, if one performs this experiment for \textit{mesh\_brain} and compares against Figure \ref{vertex_extension_tangential_comparison}, the error for larger patches barely changes, while the error for smaller patches decreases (see Figure \ref{vertex_extension_comparison_brain_refined_tangential_eeg}). The reason for this is that small patches can exploit the higher resolution in the brain area, while larger patches are still limited by the low skin resolution.

Additionally, when comparing the FEM solutions from a small patch with those from a large patch, we see that the relative difference is high in the skin elements close to the source and adjacent to the skull. This also explains why, after the increase in the error at 6-9 extensions, the error continuously drops, since with growing patches we need to approximate $u - u^\infty$, and not the more complicated $u - \chi \cdot u^\infty$, in the skin elements close to the source and adjacent to the skull.

In total, these experiments illustrate that larger patch sizes make the approach more dependent on the global mesh structure, while small patch sizes mainly depend on the local mesh structure in the proximity of the source. In particular, this is one of the main advantages of the localized subtraction approach over the analytical subtraction approach.

Importantly, an increase in vertex extensions is always accompanied by an increase in computational costs. To achieve the best speed and accuracy, we thus suggest using two vertex extensions for the patch construction.

\subsection{Comparison to other potential approaches}
We evaluate the localized subtraction approach against the analytical subtraction approach \cite{beltrachini_analytic_subtraction}. Furthermore, we want to test the localized subtraction approach against the multipolar Venant approach \cite{vorwerk_venant}, as it is a widely used and well-performing direct approach. In the following, we abbreviate the localized subtraction approach as ''loc. sub.'', the analytical subtraction approach as ''ana. sub.'' and the multipolar Venant approach as ''mul. Venant''.

\subsubsection{Accuracy comparison}
We first want to compare the accuracies of the different potential approaches. For this, we computed forward solutions for radial and tangential dipoles at different, logarithmically scaled eccentricities, ranging from 0.8803 to 0.9900. At each eccentricity, we used 1000 approximately uniformly distributed dipoles and computed the relative error with respect to the analytic solutions. For the EEG forward simulation, we shifted all forward solutions to have zero mean. For the MEG forward simulation, we computed the whole magnetic field vector. Furthermore, we only solved the MEG forward problem for tangential dipoles, since radial dipoles produce no magnetic field outside of a spherically symmetric volume conductor \cite{sarvas_analytical_meg}. The results for \textit{mesh\_init} (see Figure \ref{meshes_refinement}\,(a)) are shown in Figure \ref{source_model_accuracy_comparison_eeg_radial} and Figure \ref{source_model_accuracy_comparison_meg_tangential}.

\begin{figure}
\centerline{\includegraphics{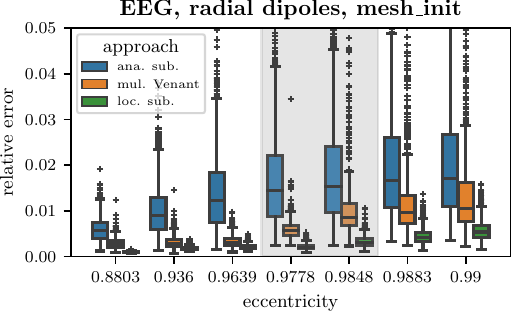}}
\caption{Accuracy comparison of EEG forward simulations using the analytical subtraction, multipolar Venant, and localized subtraction potential approaches for radial dipoles at different eccentricities using \textit{mesh\_init} (see Figure \ref{meshes_refinement}\,(a)). The $y$-axis shows the relative error. The physiologically relevant sources at 1-2 mm distance from the CSF are highlighted.} 
\label{source_model_accuracy_comparison_eeg_radial}
\end{figure}

\begin{figure}
\centerline{\includegraphics{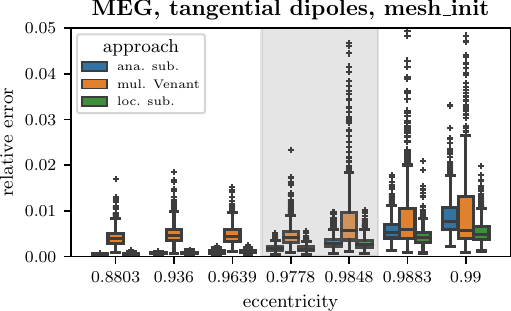}}
\caption{Accuracy comparison of MEG forward simulations using the analytical subtraction, multipolar Venant, and localized subtraction potential approaches for tangential dipoles at different eccentricities using \textit{mesh\_init} (see Figure \ref{meshes_refinement}\,(a)). The $y$-axis shows the relative error. The physiologically relevant sources at 1-2 mm distance from the CSF are highlighted.} 
\label{source_model_accuracy_comparison_meg_tangential}
\end{figure}

We first discuss the EEG case. For radial sources, we see in Figure \ref{source_model_accuracy_comparison_eeg_radial} that in \textit{mesh\_init} the localized subtraction approach outperforms the analytical subtraction and multipolar Venant approaches over all investigated eccentricities and produces a median error of below $1\%$ even for sources at $0.99$ eccentricity (meaning a distance of $0.8$ mm to the conductivity jump). Note that we computed the right-hand sides for the analytical subtraction approach using the analytical formulas from \cite{beltrachini_analytic_subtraction}, their larger errors are thus not due to an inaccurate integration but instead result from a relatively low resolution in the skin compartment, as described in section \ref{patch_construction_subsection}. The EEG results for tangential dipoles show similar behavior, with the errors being overall smaller (see Figure \ref{accuracy_comparison_eeg_tangential_mesh_init}).

Performing the same comparison in \textit{mesh\_brain} shows that, when compared to the results in Figure \ref{source_model_accuracy_comparison_eeg_radial}, refining the brain leads to a large decrease in the errors of the multipolar Venant approach, a small decrease for the localized subtraction approach and a barely noticeable decrease for the analytical subtraction approach (see Figures \ref{accuracy_comparison_eeg_radial_mesh_brain} and \ref{accuracy_comparison_eeg_tangential_mesh_brain} for the results and Figure \ref{meshes_refinement}\,(b) for the mesh). Furthermore, in this mesh, the localized subtraction approach and the multipolar Venant approach show approximately the same accuracy. Comparing Figure \ref{source_model_accuracy_comparison_eeg_radial} against the corresponding results obtained using \textit{mesh\_skin}, we instead see that the errors for the multipolar Venant approach and the localized subtraction approach barely change, while the errors for the analytical subtraction approach notably decrease (see Figures \ref{accuracy_comparison_eeg_radial_mesh_skin} and \ref{accuracy_comparison_eeg_tangential_mesh_skin} for the results and Figure \ref{meshes_refinement}\,(c) for the mesh). In particular, in this mesh, the analytical subtraction approach even surpasses the multipolar Venant approach at some eccentricities with regard to accuracy but does not reach the accuracy of the localized subtraction approach at any eccentricity.

These results illustrate the limitations of the different potential approaches. Refining the brain compartment enables a more favorable placement of monopolar loads, which leads to a decrease in errors for the multipolar Venant approach when going from \textit{mesh\_init} to \textit{mesh\_brain}. Since the resolution in the brain compartment is the limiting factor for the placement of monopolar loads, this also explains why the error for the multipole approach hardly changes when refining the skin compartment. The analytical subtraction approach on the other hand shows the opposite behavior. In this case, the low resolution of the skull-skin interface in \textit{mesh\_init} is the limiting factor, and hence we see a large error decrease after refining the skin, and only a very minor decrease after refining the brain. Similar to the analytical subtraction approach, the localized subtraction approach depends on a high resolution in the region where the correction potential shows a complicated behavior. But in contrast to the analytical subtraction approach, this region is confined to the close proximity of the source positions, and hence the high CSF resolution in \textit{mesh\_init} already suffices to achieve a good approximation of $u^c = u - \chi \cdot u^\infty$. In particular, this explains why the errors for the localized subtraction approach only slightly decrease when refining either the brain or the skin.

We now discuss the MEG case. Figure \ref{source_model_accuracy_comparison_meg_tangential} shows the accuracy comparisons for \textit{mesh\_init}. We see that, in this mesh, both the analytical subtraction and localized subtraction approach produce highly accurate results and are in particular more accurate than the multipolar Venant approach in the region of 1-2 mm distance to the conductivity jump. It is especially noteworthy that the analytical subtraction approach is able to produce such accurate results, as it produced larger errors in the EEG forward problem in this mesh. The reason for the EEG errors not having a noticeable influence on the MEG accuracy is that the errors in the EEG forward problem for the analytical subtraction approach are mainly located in the skin compartment.  But since the skin compartment is distant from the source, with the low-conducting skull in between,  the volume currents in the skin are low in magnitude and hence their contribution to the magnetic field is relatively small. In particular, these results demonstrate the strong influence of the volume currents in the close proximity of the source on the magnetic field. The accuracy of the analytical subtraction and localized subtraction MEG forward solutions also shows the impact of faithful current modeling, which was already suggested by Figure \ref{flux_loc_sub}. Furthermore, in contrast to the EEG case, the larger errors of the analytical subtraction approach for the higher eccentricities are entirely due to the numerical quadrature. When choosing a sufficiently large integration order for both approaches, the MEG analytical subtraction approach\footnote{
Note that the \textit{analytical} subtraction approaches \textit{analytically} computes the EEG right hand sides. If one then uses the Biot-Savart law to compute the magnetic field, a numerical integration is still required.}
is as accurate as the localized subtraction approach. In Figure \ref{source_model_accuracy_comparison_meg_tangential}, for the analytical subtraction approach an integration order of $5$ was chosen to achieve a somewhat reasonable computation time, while the integration orders for the localized subtraction approach were chosen as described in \ref{integration_order_chapter}.

When performing the same comparison for \textit{mesh\_skin}, we see that the errors are essentially the same as in Figure \ref{source_model_accuracy_comparison_meg_tangential} (see Figure \ref{accuracy_comparison_meg_tangential_mesh_skin}). If we furthermore look at a single potential approach and compare the errors in \textit{mesh\_init} and \textit{mesh\_brain}, we see that for all three investigated approaches the errors were smaller in \textit{mesh\_brain}. Moreover, in \textit{mesh\_brain} the different potential approaches are comparable with regard to accuracy (see Figure \ref{accuracy_comparison_meg_tangential_mesh_brain}).

The results in \textit{mesh\_skin} and \textit{mesh\_brain} again highlight the dominant role of the volume currents in the close vicinity of the source. Refining the skin does not enable a better representation of the volume currents close to the sources while refining the brain does. This explains why the errors do not change when going from \textit{mesh\_init} to \textit{mesh\_skin}, but decrease when going from \textit{mesh\_init} to \textit{mesh\_brain}.

Furthermore, these results are in line with previous investigations, where it was shown that the MEG forward problem is most sensitive to conductivity uncertainties in the proximity of the dipole location \cite{gencer_transfer_matrices}, and is insensitive to the conductivities of the skull and the skin \cite{handbook_of_neural_activity_measurement}.

\subsubsection{Efficiency comparison}
Finally, we now want to compare the computational effort of the different potential approaches. For this we measured the time it took to solve the EEG and MEG forward problems for $1000$ right-hand sides for sources at an eccentricity of $0.99$, assuming the transfer matrix has already been computed. Hence, we are measuring the time it takes to assemble the right-hand sides and then compute the product of these right-hand sides with the transfer matrix. We chose an eccentricity of $0.99$ since the effort needed for the localized subtraction right-hand side assembly depends on the distance of the source to the conductivity jump. One needs to assemble the patch, the boundary, and the transition integrals and fluxes as described in \eqref{transition_integral_eeg_main_paper} - \eqref{patch_integral_eeg_main_paper} and \eqref{boundary_subtraction}. The computational effort for the surface and transition integrals does not depend on the eccentricity, but the effort of computing the patch integrals increases for sources closer to the conductivity jump. More concretely, the patch integrals vanish for every element in the same compartment as the source, as there we have $\sigma^c = 0$, and hence we see a larger number of non-zero patch integrals when increasing the eccentricity, as a larger portion of the patch grows outside the brain compartment. Furthermore, for highly eccentric sources the patch elements with a non-zero integral get close to the source, and hence we need a high integration order on these elements.\footnote{We refer to \ref{integration_order_chapter} for an in-depth discussion of integration orders.} By measuring the computation time for sources with an eccentricity of $0.99$, which is slightly above the physiologically relevant range, we thus arrive at a worst-case estimate for the runtime of the localized subtraction approach. The experiments were run on an AMD Ryzen Threadripper 3960X CPU. The results for the different meshes (see Figure \ref{meshes_refinement}) are shown in  the tables \ref{time_comparison_mesh_init} - \ref{time_comparison_mesh_skin}. We see that the time needed for the multipole approach and for the EEG localized subtraction approach are essentially independent of the mesh size\footnote{Note that the mesh size of course still influences the time needed for the computation of the transfer matrices.}. Furthermore, we see that the computation time for the MEG localized subtraction approach is the smallest for \textit{mesh\_brain}. This can be explained by the fact that a higher resolution in the brain compartment leads to a larger portion of patch elements lying in the same compartment as the source, which eliminates the need for an expensive numerical quadrature on these elements. And most importantly, we see that in all meshes the localized subtraction approach is orders of magnitude faster than the analytical subtraction approach. Hence the localized subtraction approach achieves its central goal of being a computationally efficient approach based on subtraction of the singularity. Furthermore, note that the time needed for the computation of the transfer matrices is considerably larger than the time needed for the subsequent localized subtraction right-hand side assembly. Since for state-of-the-art direct potential approaches, such as the multipolar Venant approach \cite{vorwerk_venant}, the total time needed to solve the forward problems is also dominated by the time needed to compute the transfer matrices, the localized subtraction approach is thus a viable alternative to direct FEM potential approaches in practical applications.

\begin{table}
\caption{Forward simulation times for $1000$ sources at an eccentricity of $0.99$ in \textit{mesh\_init}}
\label{time_comparison_mesh_init}
\begin{tabular}{|p{55pt}|p{50.65pt}|p{50.65pt}|p{50.65pt}|}
\hline
\small Approach & \small loc. sub.&\small ana. sub. & \small mul. Venant\\ \hline
{\small EEG time (s) }& \small$0.93$ & \small $655$ & \small $0.02$ \\ \hline
\small MEG time (s) & \small $13$ & \small $18925$ & \small $0.04$ \\ \hline\end{tabular}
\end{table}

\begin{table}
\caption{Forward simulation times for $1000$ sources at an eccentricity of $0.99$ in \textit{mesh\_brain}}
\label{time_comparison_mesh_brain}
\begin{tabular}{|p{55pt}|p{50.65pt}|p{50.65pt}|p{50.65pt}|}
\hline
\small Approach & \small loc. sub.& \small ana. sub. & \small mul. Venant\\ \hline
\small EEG time (s) & \small $0.96$ & \small $1396$ & \small $0.02$ \\ \hline
\small MEG time (s) & \small $6.8$ & \small $43175$ & \small $0.04$ \\ \hline\end{tabular}
\end{table}

\begin{table}
\caption{Forward simulation times for $1000$ sources at an eccentricity of $0.99$ in \textit{mesh\_skin}}
\label{time_comparison_mesh_skin}
\begin{tabular}{|p{55pt}|p{50.65pt}|p{50.65pt}|p{50.65pt}|}
\hline
\small Approach & \small loc. sub.& \small ana. sub. & \small mul. Venant\\ \hline
\small EEG time (s) & \small $0.98$ & \small $1142$ & \small $0.02$ \\ \hline
\small MEG time (s) & \small $12.8$ & \small $30709$ & \small $0.04$ \\ \hline\end{tabular}
\end{table}

\section{Conclusion}
We presented a new potential approach based on a localization of the well-known subtraction approach. We investigated the theory of this new approach and validated and compared it against established approaches. In spherical head models, the new approach was shown to rival, and in some cases even surpass, existing approaches in terms of accuracy while being largely more efficient than the subtraction approach. We believe the localized subtraction to be the superior way of analytically dealing with the singularity of the mathematical point dipole, and a competitive alternative in practical applications.

\section*{Data/code availability statement}
The code implementing the presented approach is available at \url{https://gitlab.dune-project.org/duneuro/duneuro/-/tree/feature/localized-assembler}. The code used to test different integration orders is available at \url{https://github.com/MalteHoel/local_integration_test}. The data used in the experiments is available at \url{https://doi.org/10.5281/zenodo.7642826}.

\section*{Declaration of competing interests}
Declarations of interest: none.

\section*{Credit authorship contribution statement}
\begin{sloppypar}
\textbf{Malte Höltershinken:} Methodology, Software, Formal analysis, Investigation, Writing - original draft, Writing - review and editing, Visualization.
\textbf{Pia Lange:} Methodology, Software, Investigation, Writing - review and editing.
\textbf{Tim Erdbrügger:} Writing - review and editing.
\textbf{Yvonne Buschermöhle:} Writing - review and editing.
\textbf{Fabrice Wallois:} Writing - review and editing, Funding Acquisition.
\textbf{Alena Buyx:} Funding Acquisition.
\textbf{Sampsa Pursiainen:} Writing - review and editing, Funding Acquisition.
\textbf{Johannes Vorwerk:} Investigation, Writing - review and editing, Funding Acquisition.
\textbf{Christian Engwer:} Conceptualization, Methodology, Software, Writing - review and editing, Supervision.
\textbf{Carsten Wolters:} Conceptualization, Methodology, Writing - review and editing, Supervision, Funding Acquisition.
\end{sloppypar}

\section*{Acknowledgements}
\begin{sloppypar}
This work was supported by the Bundesministerium für Gesundheit (BMG) as project 
\mbox{ZMI1-2521FSB006},
under the frame of ERA PerMed as project 
\mbox{ERAPERMED2020-227} 
(MBH, PL, TE, YB, FW, AB, CHW), by the Deutsche Forschungsgemeinschaft  (DFG), project 
\mbox{WO1425/10-1}
(MBH, TE, YB, CHW), by the Deutsche Forschungsgemeinschaft through the Cluster of Excellence “Mathematics Münster: Dynamics - Geometry - Structure”
(\mbox{EXC 2044-390685587})
(CE), by the Academy of Finland
(\mbox{decision 344712})
under the frame of ERA PerMed as project 
\mbox{ERAPERMED2020-227} 
(SP), by the AoF Center of Excellence in Inverse Modelling and Imaging 2018-2025
(\mbox{decision 353089})
(SP), by the Austrian Wissenschaftsfonds (FWF), project 
\mbox{P 35949}
(JV), and by DAAD project 
\mbox{57663920}
(MBH, TE, SP, CHW). We acknowledge support from the Open Access Publication Fund of the University of Muenster.
\end{sloppypar}

\appendix
\renewcommand{\thelemma}{\Alph{section}\arabic{lemma}}

\section{Proof of lemma 1}
\label{proof_of_distributional_derivative}
\setcounter{figure}{0}
\setcounter{table}{0}
\setcounter{lemma}{0}
In the following, we assume $\Omega \subset \mathbb{R}^3$ to be open. Furthermore, we denote by $C^\infty_c(\Omega)$ the space of all infinitely differentiable, compactly supported functions on $\Omega$. Note that compactly supported functions on $\Omega$ necessarily vanish near the boundary of $\Omega$, which implies that boundary integrals involving such functions vanish. This in particular applies to the boundary integrals arising during partial integration.

To give a mathematically rigorous derivation of the method, we will make use of distributions. A \textit{distribution} (or \textit{generalized function}) is defined as a linear functional $T: C^\infty_c(\Omega) \rightarrow \mathbb{R}$ which is continuous with respect to a certain topology. Every locally integrable function $f$ defines a distribution $T_f$ via
\[
T_f(\phi) = \int_{\Omega} f \phi \, dV.
\]
In the following, we will simply write $T_f(\phi) = f(\phi)$. A central property of distributions is that they possess arbitrary derivatives, while ordinary functions may fail to be (weakly) differentiable. It is thus natural to state and investigate partial differential equations in a distributional framework. Nevertheless, we want to emphasize that the following derivation of the localized subtraction approach does not assume the reader to be familiar with this theory. For readers interested in an in-depth introduction to distributions and their application to differential equations, we refer to \cite{rudin_functional_analysis}, chapters 6-8.

For reference, we will first recall lemma \ref{distributional_derivative_u_infty_times_chi}.
\begin{lemma*}
Let $\Omega^\infty \subset \Omega$ and $\chi \in H^1(\Omega)$ be such that $\chi$ is identical to $1$ on $\Omega^\infty$ and $x_0 \in \Omega^\infty$. Let $\widetilde{\Omega} \subset \Omega\setminus\Omega^\infty$ be a region such that $\partial\Omega^\infty$ is a subset of $\partial\Omega \cup \partial\widetilde{\Omega}$, and $\chi = 0$ outside $\Omega^\infty \cup \widetilde{\Omega}$. Then for any test function $\phi \in C^\infty_c(\Omega)$ we have
\begin{multline}
\divergence\left(\sigma^\infty \nabla \left(\chi \cdot u^\infty \right)\right)(\phi) 
 = \\ \divergence\left(M \cdot \delta_{x_0}\right)(\phi)
 - \int_{\widetilde{\Omega}} \langle\sigma^\infty \nabla \left(\chi \cdot u^\infty \right), \nabla \phi \rangle\, dV \\
 - \int_{\partial\Omega^\infty} \langle\sigma^\infty \nabla u^\infty , \eta\rangle \phi \, dS.
\label{distributional_derivative_chi_u_inf}
\end{multline}
Here $\eta$ is the unit outer normal of $\Omega^\infty$.
\end{lemma*}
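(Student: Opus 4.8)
The plan is to unwind the definition of the distributional derivative, use the compact support of $\chi\cdot u^\infty$ to reduce the computation to the patch $\Omega^\infty$ and the transition region $\widetilde\Omega$, and then isolate the singularity of $u^\infty$ at $x_0$ by excising a small ball. Since $\sigma^\infty$ is a constant symmetric matrix, for $\phi\in C^\infty_c(\Omega)$ the distribution on the left-hand side acts by
\[
\divergence\bigl(\sigma^\infty\nabla(\chi u^\infty)\bigr)(\phi)=\int_\Omega \chi\,u^\infty\,\divergence(\sigma^\infty\nabla\phi)\,dV.
\]
Because $\chi=0$ outside $\Omega^\infty\cup\widetilde\Omega$ while $\chi\equiv 1$ on $\Omega^\infty$, this equals $I_1+I_2$, where $I_1=\int_{\Omega^\infty}u^\infty\,\divergence(\sigma^\infty\nabla\phi)\,dV$ and $I_2=\int_{\widetilde\Omega}\chi u^\infty\,\divergence(\sigma^\infty\nabla\phi)\,dV$; I treat the two separately.

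For $I_2$ I would use that $x_0\in\Omega^\infty$ is bounded away from $\overline{\widetilde\Omega}$, so $u^\infty$ is smooth near $\overline{\widetilde\Omega}$ and $\chi u^\infty\in H^1(\widetilde\Omega)$. One application of the divergence theorem (the $H^1$ function $\chi u^\infty$ paired with the smooth field $\sigma^\infty\nabla\phi$ on the Lipschitz domain $\widetilde\Omega$), together with the symmetry of $\sigma^\infty$, gives
\[
I_2=-\int_{\widetilde\Omega}\langle\sigma^\infty\nabla(\chi u^\infty),\nabla\phi\rangle\,dV+\int_{\partial\widetilde\Omega}\chi u^\infty\,\langle\sigma^\infty\nabla\phi,\nu\rangle\,dS,
\]
with $\nu$ the outer normal of $\widetilde\Omega$. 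Splitting $\partial\widetilde\Omega$ into the part in $\partial\Omega$ (where $\phi$ vanishes, as it is compactly supported in $\Omega$), the part in $\partial\Omega^\infty$ (where $\chi=1$, hence $\chi u^\infty=u^\infty$, and $\nu=-\eta$), and the remaining interior part (where $\chi=0$ by continuity, since $\chi$ vanishes on the other side), the surface term collapses to $-\int_{\partial\Omega^\infty}u^\infty\langle\sigma^\infty\nabla\phi,\eta\rangle\,dS$.

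For $I_1$, which carries the singularity, I would excise a ball $B_\varepsilon=B_\varepsilon(x_0)\subset\Omega^\infty$ small enough to lie in the region where $\sigma\equiv\sigma^\infty$. On $\Omega^\infty\setminus\overline{B_\varepsilon}$ the function $u^\infty$ is smooth and $\divergence(\sigma^\infty\nabla u^\infty)=0$ pointwise, so Green's second identity rewrites $\int_{\Omega^\infty\setminus\overline{B_\varepsilon}}u^\infty\divergence(\sigma^\infty\nabla\phi)\,dV$ as the boundary integral $\int_{\partial\Omega^\infty}\bigl(u^\infty\langle\sigma^\infty\nabla\phi,\eta\rangle-\phi\langle\sigma^\infty\nabla u^\infty,\eta\rangle\bigr)\,dS$ minus the analogous integral over $\partial B_\varepsilon$. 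As $\varepsilon\to 0$ the left-hand side tends to $I_1$ (because $u^\infty\in L^1_{\mathrm{loc}}$ and $\divergence(\sigma^\infty\nabla\phi)$ is bounded) and the $\partial\Omega^\infty$ term is unchanged, so everything hinges on the limit of the $\partial B_\varepsilon$ integral. To identify it I would run the same excision on all of $\mathbb{R}^3$ with a cutoff $\psi\in C^\infty_c(\mathbb{R}^3)$ that coincides with $\phi$ near $x_0$: the whole-space equation \eqref{pde_whole_space} gives $\int_{\mathbb{R}^3}u^\infty\divergence(\sigma^\infty\nabla\psi)\,dV=\divergence(M\cdot\delta_{x_0})(\psi)=-\langle M,\nabla\psi(x_0)\rangle$, and Green's identity on $\mathbb{R}^3\setminus\overline{B_\varepsilon}$ (with no contribution at infinity, $\psi$ being compactly supported) then shows that the $\partial B_\varepsilon$ limit for $\psi$ equals this value. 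Since that limit only sees $\phi$ and $\nabla\phi$ at $x_0$, it is unchanged if $\psi$ is replaced by $\phi$, so the $\partial B_\varepsilon$ limit in the formula for $I_1$ equals $\divergence(M\cdot\delta_{x_0})(\phi)$. Thus $I_1=\divergence(M\cdot\delta_{x_0})(\phi)+\int_{\partial\Omega^\infty}u^\infty\langle\sigma^\infty\nabla\phi,\eta\rangle\,dS-\int_{\partial\Omega^\infty}\langle\sigma^\infty\nabla u^\infty,\eta\rangle\phi\,dS$.

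Adding $I_1$ and $I_2$, the two $\int_{\partial\Omega^\infty}u^\infty\langle\sigma^\infty\nabla\phi,\eta\rangle\,dS$ terms cancel and what is left is precisely \eqref{distributional_derivative_chi_u_inf}. The one genuinely delicate point is the singular limit --- verifying that the small-sphere integral reproduces exactly the source $\divergence(M\cdot\delta_{x_0})$ --- which is why I reduce it to the already-available identity \eqref{pde_whole_space} instead of computing the sphere integral from the explicit formula for $u^\infty$; the remaining ingredients (Lipschitz regularity of $\partial\Omega^\infty$ and $\partial\widetilde\Omega$, since these are unions of mesh faces; continuity of $\chi$; $u^\infty\in L^1_{\mathrm{loc}}$ with $u^\infty$ smooth away from $x_0$) are routine and serve only to license the two applications of the divergence theorem.
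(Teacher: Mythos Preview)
Your proof is correct but takes a different route from the paper's. The paper's key maneuver is the decomposition $\chi = 1 + (\chi - 1)$ rather than a domain split: the ``$1$'' part gives $\int_\Omega u^\infty\,\divergence(\sigma^\infty\nabla\phi)\,dV$, which is recognized \emph{directly} as the distributional pairing $\divergence(\sigma^\infty\nabla u^\infty)(\phi)=\divergence(M\cdot\delta_{x_0})(\phi)$ via \eqref{pde_whole_space}, with no excision and no limit; the ``$(\chi-1)$'' part vanishes on $\Omega^\infty$, so everything remaining is regular and two integrations by parts on $\Omega\setminus\Omega^\infty$ finish the argument. You instead split the domain into $\Omega^\infty$ and $\widetilde\Omega$, which forces you to confront the singularity explicitly on $\Omega^\infty$ via ball excision, Green's second identity, and a limit. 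Both arguments arrive at the same identity; the paper's is slicker because the singular contribution is identified purely at the level of distributions, whereas yours is the more classical potential-theory computation. Your device of identifying the small-sphere limit by comparison with a compactly supported $\psi$ on $\mathbb{R}^3$ (rather than computing the sphere integral from the explicit formula for $u^\infty$) is a clean way to close that step, and it is worth noting that once $\psi$ agrees with $\phi$ on a neighborhood of $x_0$, the $\partial B_\varepsilon$ integrals for $\psi$ and $\phi$ are literally equal for small $\varepsilon$, not merely equal in the limit.
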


\begin{proof}
First, note that $\chi \cdot u^\infty$ is locally integrable, and hence defines a distribution. We then have by definition
\begin{align*}
\divergence(\sigma^\infty \nabla(\chi \cdot u^\infty)) (\phi)
\stackrel{\footnotemark}{=} \int_{\Omega} \chi \cdot u^\infty \divergence(\sigma^\infty \nabla \phi) \, dV \\
= \int_{\Omega} u^\infty \divergence(\sigma^\infty \nabla \phi) \, dV
+ \int_{\Omega} (\chi - 1) u^\infty \divergence(\sigma^\infty \nabla \phi) \, dV \\
= \divergence(\sigma^\infty \nabla u^\infty)(\phi) 
+ \int_{\Omega\setminus\Omega^\infty} (\chi - 1) u^\infty \divergence(\sigma^\infty \nabla \phi) \, dV \\
= \divergence(M \cdot \delta_{x_0})(\phi) 
+ \int_{\Omega\setminus\Omega^\infty} (\chi - 1) u^\infty \divergence(\sigma^\infty \nabla \phi) \, dV.
\end{align*}
\footnotetext{If we interpret the left-hand side in the distributional sense, the right-hand side of this equation is essentially the definition of the left-hand side. In a less rigorous manner, one might write the left-hand side as 
\[
\int_{\Omega}\divergence(\sigma^\infty \nabla(\chi \cdot u^\infty)) \phi \, dV,
\]
apply partial integration two times and note that the boundary integrals vanish since $\phi = 0$ and $\nabla\phi = 0$ on the boundary of $\Omega$. Note that this is indeed ''less rigorous'' since e.g. $\nabla(\chi \cdot u^\infty)$ is not integrable. This illustrates the benefit of interpreting the derivatives in the distributional sense.
}
Here we have used for the first summand that $u^\infty$ solves $\divergence(\sigma^\infty \nabla u^\infty) = \divergence(M \cdot \delta_{x_0})$ on $\mathbb{R}^3$ and that $\phi$ has compact support in $\Omega$. For the second summand, we have used $\chi = 1$ on $\Omega^\infty$. Note that $(\chi - 1) u^\infty$ is weakly differentiable on $\Omega\setminus\Omega^\infty$, and hence we can apply partial integration to the second summand. Letting $\eta$ denote the corresponding unit outer normal, this yields
\begin{multline*}
\int_{\Omega\setminus\Omega^\infty} (\chi - 1) u^\infty \divergence(\sigma^\infty \nabla \phi) \, dV \\
= \int_{\partial(\Omega \setminus \Omega^\infty)} (\chi - 1) u^\infty \langle \sigma^\infty\nabla \phi, \eta\rangle\, dS\\
- \int_{\Omega\setminus\Omega^\infty} \langle\sigma^\infty\nabla((\chi - 1) u^\infty) , \nabla \phi\rangle \, dV
\end{multline*}
Since $\partial(\Omega\setminus\Omega^\infty) \subset \partial \Omega \cup \partial \Omega^\infty$ and we have $\nabla \phi = 0$ on $\partial \Omega$ and $\chi - 1 = 0$ on $\partial\Omega^\infty$ the boundary integral vanishes. For the volume integral, we get by again applying partial integration
\begin{multline*}
- \int_{\Omega\setminus\Omega^\infty} \langle\sigma^\infty\nabla((\chi - 1) u^\infty) , \nabla \phi\rangle \, dV\\
= -\int_{\Omega\setminus\Omega^\infty} \langle \sigma^\infty \nabla(\chi \cdot u^\infty), \nabla \phi\rangle \, dV
+ \int_{\Omega\setminus\Omega^\infty} \langle \sigma^\infty \nabla u^\infty, \nabla\phi\rangle\, dV \\
= -\int_{\Omega\setminus\Omega^\infty} \langle \sigma^\infty \nabla(\chi \cdot u^\infty), \nabla \phi\rangle \, dV
+ \int_{\partial(\Omega\setminus\Omega^\infty)} \langle \sigma^\infty \nabla u^\infty, \eta\rangle \phi \, dS\\
- \int_{\Omega\setminus\Omega^\infty} \divergence(\sigma^\infty \nabla u^\infty) \phi \, dV.
\end{multline*}
Since on $\mathbb{R}^3\setminus\{x_0\}$ we have $\divergence(\sigma^\infty \nabla u^\infty) = 0$, the last integral vanishes. For the boundary integral, since $\phi$ has compact support in $\Omega$ the portion of the integral on the boundary of $\Omega$ vanishes. Noting that the outer normal of $\Omega\setminus\Omega^\infty$ on the boundary of $\Omega^\infty$ in the interior of $\Omega$ is the negative of the outer normal of $\Omega^\infty$, we arrive at \eqref{distributional_derivative_chi_u_inf}. We want to emphasize that this derivation also covers the case where the patch touches the boundary of $\Omega$ since in this case, the corresponding portion of the boundary integral in \eqref{distributional_derivative_chi_u_inf} vanishes since $\phi$ has compact support. This completes the proof.
\end{proof}

\section{Derivation of the weak formulation}
\label{weak_formulation_correct}
\setcounter{figure}{0}
\setcounter{table}{0}
\setcounter{lemma}{0}
Let
\begin{equation}
a(w, v) = \int_{\Omega} \langle \sigma \nabla w, \nabla v \rangle\, dV
\end{equation}
and 
\begin{multline}
l(v) = - \int_{\widetilde{\Omega}} \langle \sigma \nabla \left(\chi \cdot u^\infty \right), \nabla v\rangle \, dV \\
- \int_{\partial \Omega^\infty} \langle\sigma^\infty \nabla u^\infty, \eta \rangle v \, dS
- \int_{\Omega^\infty} \langle \sigma^c \nabla u^\infty, \nabla v \rangle \, dV
\label{localized_subtraction_rhs}
\end{multline}
be defined as in the main paper. Then $a$ defines a continuous bilinear form and $l$ defines a continuous linear form on $H^1(\Omega)$. We want to show that, for $u^c \in H^1(\Omega)$ a solution of $a(u^c, v) = l(v)$ for all $v \in H^1(\Omega)$, the function $u^c + \chi \cdot u^\infty$ solves the original equation in a suitable sense. Reading this backward can serve as a derivation of the weak formulation.

\begin{lemma}
Let $u^c \in H^1(\Omega)$ be a solution of $a(u^c, v) = l(v)$ for all $v \in H^1(\Omega)$. Then $u = u^c + \chi \cdot u^\infty$ solves
\begin{align*}
\divergence(\sigma \nabla u) &= \divergence(M \cdot \delta_{x_0}) &\text{ on $\Omega$} \\
\langle \sigma \nabla u, \eta\rangle &= 0 &\text{ on $\partial\Omega$}.
\end{align*}
\end{lemma}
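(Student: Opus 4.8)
The plan is to run the derivation that produced the weak formulation in reverse. Write $u = u^c + \chi\cdot u^\infty$ and use the splitting $\sigma\nabla u = \sigma\nabla u^c + \sigma^c\nabla(\chi\cdot u^\infty) + \sigma^\infty\nabla(\chi\cdot u^\infty)$ from the main text. The fields $\sigma\nabla u^c$ and $\sigma^c\nabla(\chi\cdot u^\infty)$ lie in $L^2(\Omega)$ --- in particular $\sigma^c$ vanishes near $x_0$, so $\sigma^c\nabla(\chi\cdot u^\infty)$ carries no singularity --- and thus define distributions in the usual way, while the only singular summand $\divergence(\sigma^\infty\nabla(\chi\cdot u^\infty))$ is exactly the object evaluated by Lemma \ref{distributional_derivative_u_infty_times_chi}. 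This is how $\divergence(\sigma\nabla u)$ is to be read as a distribution on $\Omega$.

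For the interior equation, I would fix $\phi\in C^\infty_c(\Omega)$ and evaluate $\divergence(\sigma\nabla u)(\phi)$ term by term. The $u^c$-term gives $-\int_\Omega\langle\sigma\nabla u^c,\nabla\phi\rangle\,dV = -a(u^c,\phi) = -l(\phi)$, since $\phi$ is an admissible test function. The $\sigma^c$-term, after splitting the domain along $\partial\Omega^\infty$ and using $\chi\equiv 1$ on $\Omega^\infty$, gives $-\int_{\Omega^\infty}\langle\sigma^c\nabla u^\infty,\nabla\phi\rangle\,dV - \int_{\widetilde\Omega}\langle\sigma^c\nabla(\chi\cdot u^\infty),\nabla\phi\rangle\,dV$. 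The $\sigma^\infty$-term is supplied directly by Lemma \ref{distributional_derivative_u_infty_times_chi}. Adding the three, one substitutes the definition \eqref{localized_subtraction_rhs} of $l$; the $\Omega^\infty$-integrals cancel, the $\partial\Omega^\infty$-integrals cancel, and the three integrals over $\widetilde\Omega$ collapse because $\sigma^c + \sigma^\infty = \sigma$. What survives is $\divergence(\sigma\nabla u)(\phi) = \divergence(M\cdot\delta_{x_0})(\phi)$, and since $\phi$ was arbitrary this is the first assertion.

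For the boundary condition I would invoke the extra information that $a(u^c,v) = l(v)$ holds for \emph{all} $v\in H^1(\Omega)$, not just for compactly supported $\phi$ --- this is the standard natural (Neumann-type) boundary condition of the variational problem. Since $\divergence(\sigma\nabla u)$ is supported at $x_0$ by the interior equation, the field $\sigma\nabla u$ is regular near $\partial\Omega$ (recall $x_0$ is an interior point), so it has a well-defined normal trace there; redoing the term-by-term computation of the previous step with a general $v\in H^1(\Omega)$ --- i.e. applying the generalized Green identity to the regular parts and reproducing the $\partial\Omega^\infty$-term as in the proof of Lemma \ref{distributional_derivative_u_infty_times_chi} --- leaves exactly one additional contribution, the boundary integral $\int_{\partial\Omega}\langle\sigma\nabla u,\eta\rangle v\,dS$, because $v$ need no longer vanish near $\partial\Omega$. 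Matching against $a(u^c,v) = l(v)$ and using that all interior terms already agree forces $\int_{\partial\Omega}\langle\sigma\nabla u,\eta\rangle v\,dS = 0$ for every $v\in H^1(\Omega)$, which is the (weak form of the) desired boundary condition.

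The main obstacle is not estimation but correct handling of the singularity: $\sigma\nabla u$ fails to be locally integrable at $x_0$, so $\divergence(\sigma\nabla u)$ only makes sense through the splitting that isolates $\sigma^\infty\nabla(\chi\cdot u^\infty)$, and Lemma \ref{distributional_derivative_u_infty_times_chi} must be brought in at precisely that point; everything else is an algebraic cancellation. A secondary technical subtlety is the Green identity in the boundary step when the patch $\Omega^\infty$ meets $\partial\Omega$, but, exactly as in the proof of Lemma \ref{distributional_derivative_u_infty_times_chi}, the part of $\partial\Omega^\infty$ contained in $\partial\Omega$ is harmless: the relevant test functions there are either zero (interior step) or absorbed into the $\partial\Omega$-integral (boundary step).
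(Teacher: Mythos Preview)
Your interior-equation argument is exactly the paper's: split $\sigma\nabla u$ into the three pieces $\sigma\nabla u^c$, $\sigma^c\nabla(\chi u^\infty)$, $\sigma^\infty\nabla(\chi u^\infty)$, invoke Lemma~\ref{distributional_derivative_u_infty_times_chi} for the singular piece, and cancel against $a(u^c,\phi)=l(\phi)$.

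For the boundary condition there is a slight divergence worth noting. You propose to rerun the three-term computation with a general $v\in H^1(\Omega)$ and collect the surviving $\partial\Omega$-integral. This is morally right, but Lemma~\ref{distributional_derivative_u_infty_times_chi} is stated and proved only for $\phi\in C_c^\infty(\Omega)$; the distributional bookkeeping in its proof uses that $\divergence(\sigma^\infty\nabla\phi)$ is an honest function, which fails for generic $v\in H^1$. The paper sidesteps this by choosing test functions $\phi\in C^\infty(\overline\Omega)$ whose support avoids $x_0$. Then every integrand involving $u^\infty$ is regular, and one can convert the $\partial\Omega^\infty$-surface term in $l(\phi)$ back into a volume integral over $\Omega^\infty$ via $\divergence(\sigma^\infty\nabla u^\infty)=0$, obtaining $l(\phi)=-\int_\Omega\langle\sigma\nabla(\chi u^\infty),\nabla\phi\rangle\,dV$. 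The variational identity then collapses to $\int_\Omega\langle\sigma\nabla u,\nabla\phi\rangle\,dV=0$, and a single (now entirely classical) integration by parts yields $\int_{\partial\Omega}\langle\sigma\nabla u,\eta\rangle\phi\,dS=0$. Your route can be made to work as well, but restricting to smooth test functions that vanish near $x_0$ is the cleaner device for keeping the singularity out of the boundary argument.
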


\medskip

\begin{proof}
Since $u^c \in H^1(\Omega)$, $u^\infty$ is locally integrable and $\chi$ is bounded, it follows that $u$ is locally integrable and hence defines a distribution. We then have for a test function $\phi \in  C^\infty_c(\Omega)$
\begin{multline*}
\divergence(\sigma \nabla u)(\phi)
= \divergence(\sigma \nabla u^c)(\phi) 
+ \divergence(\sigma^c \nabla(\chi\cdot u^\infty))(\phi) \\
+ \divergence(\sigma^\infty \nabla(\chi \cdot u^\infty))(\phi) \\
= -\int_{\Omega} \langle \sigma \nabla u^c, \nabla \phi\rangle \, dV
- \int_{\Omega} \langle \sigma^c \nabla (\chi \cdot u^\infty), \nabla \phi\rangle \, dV \\
+ \divergence(M \cdot \delta_{x_0})(\phi)
- \int_{\widetilde{\Omega}} \langle \sigma^\infty \nabla(\chi \cdot u^\infty), \nabla \phi\rangle \, dV \\
- \int_{\partial\Omega^\infty} \langle\sigma^\infty \nabla u^\infty , \eta\rangle \phi \, dS \\
= \divergence(M \cdot \delta_{x_0})(\phi) - a(u^c, \phi) 
 - \int_{\widetilde{\Omega}} \langle \sigma \nabla \left(\chi \cdot u^\infty \right), \nabla \phi\rangle \, dV \\
- \int_{\partial \Omega^\infty} \langle\sigma^\infty \nabla u^\infty, \eta \rangle \phi \, dS
- \int_{\Omega^\infty} \langle \sigma^c \nabla u^\infty, \nabla \phi \rangle \, dV \\
= \divergence(M \cdot \delta_{x_0})(\phi),
\end{multline*}
where the last equality follows from $a(u^c, \phi) = l(\phi)$ and the definition of $l$. We have thus shown that $u$ solves
\begin{align*}
\divergence(\sigma \nabla u) &= \divergence(M \cdot \delta_{x_0}) &\text{ on $\Omega$}.
\end{align*}
Now note that since $\divergence(\sigma^\infty \nabla u^\infty) = 0$ on $\mathbb{R}^3\setminus\{x_0\}$, we have for a smooth function $\phi \in C^\infty\left(\overline{\Omega}\right)$ with $\support(\phi)\cap \{x_0\} = \emptyset$ that
\begin{equation*}
\int_{\partial \Omega^\infty} \langle\sigma^\infty \nabla u^\infty, \eta \rangle \phi \, dS
= \int_{\Omega^\infty} \langle \sigma^\infty \nabla u^\infty, \nabla \phi\rangle \, dV.
\end{equation*}
We thus have
\begin{multline*}
l(\phi) = - \int_{\widetilde{\Omega}} \langle \sigma \nabla \left(\chi \cdot u^\infty \right), \nabla \phi\rangle \, dV \\
- \int_{\partial \Omega^\infty} \langle\sigma^\infty \nabla u^\infty, \eta \rangle \phi \, dS
- \int_{\Omega^\infty} \langle \sigma^c \nabla u^\infty, \nabla \phi \rangle \, dV\\
= - \int_{\widetilde{\Omega}} \langle \sigma \nabla \left(\chi \cdot u^\infty \right), \nabla \phi\rangle \, dV
- \int_{\Omega^\infty} \langle \sigma \nabla u^\infty, \nabla \phi\rangle\, dV\\
=  - \int_{\widetilde{\Omega}} \langle \sigma \nabla \left(\chi \cdot u^\infty \right), \nabla \phi\rangle \, dV
-  \int_{\Omega^\infty} \langle \sigma \nabla \left(\chi \cdot u^\infty \right), \nabla \phi\rangle \, dV\\
= -\int_{\Omega}\langle \sigma \nabla \left(\chi \cdot u^\infty \right), \nabla \phi\rangle \, dV.
\end{multline*}
Inserting this into $a(u^c, \phi) - l(\phi) = 0$ yields
\begin{equation*}
\int_{\Omega} \langle \sigma \nabla u, \nabla \phi\rangle \, dV = 0.
\end{equation*}
Since $\divergence(\sigma\nabla u) = 0$ on $\Omega\setminus\{x_0\}$ applying partial integration again yields
\begin{equation*}
\int_{\partial \Omega} \langle \sigma\nabla u, \eta\rangle \phi \, dS = 0,
\end{equation*}
which yields $\langle \sigma \nabla u, \eta\rangle = 0$ on $\partial\Omega$. This completes the proof.
\end{proof}

\section{Well-posedness of the problem}
\label{well_defined_proof}
\setcounter{figure}{0}
\setcounter{table}{0}
\setcounter{lemma}{0}
In the main paper, we claimed that the functional $l$ vanishes on constant functions and that the family of solutions to the total potential $u$ does not depend on the choice of $\chi, \Omega^\infty$, and $\widetilde{\Omega}$. We want to give proof of these claims.

\begin{lemma}
\label{functional_vanishes_on_constants}
The functional $l$ in \eqref{localized_subtraction_rhs} vanishes on constant functions.
\end{lemma}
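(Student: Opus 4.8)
The plan is to use that constants have vanishing gradient to collapse \eqref{localized_subtraction_rhs} to a single boundary term, and then to recognise that term as the (vanishing) net flux of a free-space dipole current. First I would take $v \equiv c$ constant; since $\nabla v = 0$, the transition integral $-\int_{\widetilde{\Omega}}\langle\sigma\nabla(\chi u^\infty),\nabla v\rangle\,dV$ and the patch integral $-\int_{\Omega^\infty}\langle\sigma^c\nabla u^\infty,\nabla v\rangle\,dV$ vanish, leaving $l(v) = -c\int_{\partial\Omega^\infty}\langle\sigma^\infty\nabla u^\infty,\eta\rangle\,dS$. It thus remains to show $\int_{\partial\Omega^\infty}\langle\sigma^\infty\nabla u^\infty,\eta\rangle\,dS = 0$, which expresses that the total flux of the free-space dipole current $-\sigma^\infty\nabla u^\infty$ through any surface enclosing $x_0$ is zero, reflecting that a point dipole carries zero net charge.

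The main obstacle is the non-integrable singularity of $\nabla u^\infty$ at $x_0$ (it grows like $\|x - x_0\|^{-3}$), which forbids a naive application of the divergence theorem on $\Omega^\infty$; I would circumvent it with a smooth cutoff. Since $\Omega^\infty$ contains a neighbourhood of $x_0$, pick $\psi \in C^\infty_c(\Omega^\infty)$ with $\psi \equiv 1$ on a neighbourhood of $x_0$ and set $\phi := 1 - \psi$. Then $\phi \in C^\infty(\overline{\Omega})$, $\support(\phi)$ avoids $x_0$, and $\phi \equiv 1$ on a neighbourhood of $\partial\Omega^\infty$, so the identity $\int_{\partial\Omega^\infty}\langle\sigma^\infty\nabla u^\infty,\eta\rangle\phi\,dS = \int_{\Omega^\infty}\langle\sigma^\infty\nabla u^\infty,\nabla\phi\rangle\,dV$ established in Appendix \ref{weak_formulation_correct} applies and yields $\int_{\partial\Omega^\infty}\langle\sigma^\infty\nabla u^\infty,\eta\rangle\,dS = -\int_{\Omega^\infty}\langle\sigma^\infty\nabla u^\infty,\nabla\psi\rangle\,dV$. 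Since $\nabla\psi$ is supported in a region bounded away from $x_0$ on which $u^\infty$ is smooth, $\divergence(\sigma^\infty\nabla u^\infty) = 0$, and $\nabla\psi$ vanishes on that region's boundary, one more integration by parts identifies the right-hand side with the distributional pairing $\divergence(\sigma^\infty\nabla u^\infty)(\psi) = \int_{\Omega^\infty} u^\infty\,\divergence(\sigma^\infty\nabla\psi)\,dV$. By \eqref{pde_whole_space} this equals $\divergence(M\cdot\delta_{x_0})(\psi) = -\langle M,\nabla\psi(x_0)\rangle = 0$, because $\psi$ is constant near $x_0$; hence $l(v) = 0$.

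As an alternative route, one can first apply the divergence theorem on $\Omega^\infty \setminus \overline{B_\varepsilon(x_0)}$, where $u^\infty$ is smooth and divergence-free, to reduce $\int_{\partial\Omega^\infty}\langle\sigma^\infty\nabla u^\infty,\eta\rangle\,dS$ to the flux $\int_{\partial B_\varepsilon(x_0)}\langle\sigma^\infty\nabla u^\infty,\eta_\varepsilon\rangle\,dS$ through a small sphere, and then evaluate the latter from the explicit formula for $u^\infty$: in the isotropic case $u^\infty$ is positively homogeneous of degree $-2$ and odd about $x_0$, so Euler's identity gives $\langle\nabla u^\infty,\eta_\varepsilon\rangle = -2u^\infty/\varepsilon$ on $\partial B_\varepsilon(x_0)$ while $\int_{\partial B_\varepsilon(x_0)} u^\infty\,dS = 0$ by the antisymmetry of $u^\infty$, so the flux is zero (the anisotropic case follows by a linear change of variables). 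In either version I expect the only delicate points to be the bookkeeping of the outward normals and the justification of the integrations by parts near the singularity; everything else is routine.
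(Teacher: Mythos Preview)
Your argument is correct, but it takes a different route from the paper. Both you and the paper first reduce to showing $\int_{\partial\Omega^\infty}\langle\sigma^\infty\nabla u^\infty,\eta\rangle\,dS = 0$. From there the paper goes \emph{outward}: it uses the divergence theorem on the annular region between $\Omega^\infty$ and a large sphere $S_R(x_0)$ (where $\divergence(\sigma^\infty\nabla u^\infty)=0$) to identify the flux through $\partial\Omega^\infty$ with that through $S_R(x_0)$, then invokes the explicit formula for $u^\infty$ to get $\langle\sigma^\infty\nabla u^\infty,\eta\rangle \in \mathcal{O}(1/R^3)$, so the sphere flux is $\mathcal{O}(1/R)$ and vanishes as $R\to\infty$. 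Your primary route instead stays inside $\Omega^\infty$ and leverages the distributional equation: the cutoff $\psi$ converts the boundary flux into the pairing $\divergence(\sigma^\infty\nabla u^\infty)(\psi)=\divergence(M\cdot\delta_{x_0})(\psi)=-\langle M,\nabla\psi(x_0)\rangle=0$, which is a clean ``zero net charge of a dipole'' argument and reuses the identity already proved in Appendix~\ref{weak_formulation_correct}. Your alternative goes \emph{inward} to a small sphere and uses oddness/homogeneity; the paper explicitly mentions that this symmetry approach is the one taken in \cite{wolters_subtraction_method}, and presents the decay argument as a deliberate variant. The paper's proof is perhaps the most self-contained (it needs only the asymptotics of $u^\infty$), while yours avoids any growth or decay computation at the cost of appealing to the distributional PDE and the Appendix~\ref{weak_formulation_correct} identity.
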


\begin{proof}
Using $\nabla 1 = 0$ we get
\begin{equation*}
l(1) = 0 \iff \int_{\partial\Omega^\infty} \langle \sigma^\infty \nabla u^\infty, \eta\rangle \, dS = 0.
\end{equation*}
Now let $R > 0$ be large enough so that $\Omega^\infty$ is contained inside the interior of the sphere $S_R(x_0)$ of radius $R$ around $x_0$. Let $W \subset \mathbb{R}^3$ denote region bounded by $\Omega^\infty$ and $S_R(x_0)$. Since $\divergence(\sigma^\infty \nabla u^\infty) = 0$ on $\mathbb{R}^3\setminus\{x_0\}$ the divergence theorem then implies
\begin{equation*}
0 = \int_W \divergence\left(\sigma^\infty \nabla u^\infty\right)\, dV
= \int_{\partial W} \langle \sigma^\infty \nabla u^\infty, \eta \rangle\, dS,
\end{equation*}
where $\eta$ denotes the outer normal of $W$. Since we have $\partial W$ = $S_R(x_0) \cup \partial \Omega^\infty$, and the outer normal of $W$ on $\partial\Omega^\infty$ is the negative of the outer normal of $\Omega^\infty$, this implies
\begin{equation*}
\int_{\partial\Omega^\infty} \langle \sigma^\infty \nabla u^\infty, \eta\rangle \, dS
= \int_{S_R(x_0)} \langle \sigma^\infty \nabla u^\infty, \eta\rangle \, dS,
\end{equation*}
where $\eta$ again denotes the corresponding unit outer normals.
We now have for $x \in \mathbb{R}^3\setminus\{x_0\}$ that
\begin{align*}
u^\infty(x) =
\frac{1}{4 \pi \sqrt{\det\sigma^\infty}} \cdot \frac{\langle M, \left(\sigma^\infty\right)^{-1} (x - x_0)\rangle}{\|\left(\sigma^\infty\right)^{-\frac{1}{2}}(x - x_0)\|^3},
\end{align*}
see e.g. \cite{wolters_subtraction_method}. Denoting $R= \|x - x_0\|$, a straightforward computation shows that $\langle \sigma^\infty \nabla u^\infty, \eta\rangle \in \mathcal{O}\left(\frac{1}{R^3}\right)$. This implies 
\begin{equation*}
\int_{S_R(x_0)} \langle \sigma^\infty \nabla u^\infty, \eta\rangle \, dS \in \mathcal{O}\left(\frac{1}{R}\right),
\end{equation*}
and hence
\begin{equation*}
\int_{\partial\Omega^\infty} \langle \sigma^\infty \nabla u^\infty, \eta\rangle \, dS
= \int_{S_R(x_0)} \langle \sigma^\infty \nabla u^\infty, \eta\rangle \, dS
\stackrel{R \rightarrow \infty}{\longrightarrow} 0.
\end{equation*}
This is only possible if
\begin{equation*}
\int_{\partial\Omega^\infty} \langle \sigma^\infty \nabla u^\infty, \eta\rangle \, dS = 0.
\end{equation*}
\end{proof}

Note that the previous lemma essentially consists of showing that 
\[
\int_{\partial\Omega^\infty} \langle \sigma^\infty \nabla u^\infty, \eta\rangle \, dS = 0.
\]
The same integral also appears in \cite{wolters_subtraction_method}, where the authors use a symmetry argument to deduce that it is equal to $0$. Since our argument in lemma \ref{functional_vanishes_on_constants} takes a somewhat different approach, we deemed it interesting enough to include.

\begin{lemma}
Let $\Omega^\infty_1, \widetilde{\Omega}_1, \chi_1$ and $\Omega^\infty_2, \widetilde{\Omega}_2, \chi_2$ be two choices of patch, transition region, and cutoff function. Let $u^c_1$ and $u^c_2$ be solutions to the corresponding weak formulations. Then we have
\[
u^c_1 + \chi_1 \cdot u^\infty + \mathbb{R} \cdot 1 = u^c_2 + \chi_2 \cdot u^\infty + \mathbb{R} \cdot 1.
\]
The potentials defined by the localized subtraction approach thus do not depend on the choice of the patch, transition region, and cutoff function.
\end{lemma}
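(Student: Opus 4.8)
The plan is to reduce the claim to the uniqueness, up to an additive constant, of the weak solution of a homogeneous Neumann-type problem. First I would invoke the Lemma of Appendix \ref{weak_formulation_correct}: applied to each of the two data sets, it shows that both total potentials $u_1 := u^c_1 + \chi_1 \cdot u^\infty$ and $u_2 := u^c_2 + \chi_2 \cdot u^\infty$ solve \eqref{pde_volume}--\eqref{pde_boundary}, i.e. $\divergence(\sigma \nabla u_i) = \divergence(M \cdot \delta_{x_0})$ on $\Omega$ in the distributional sense and $\langle \sigma \nabla u_i, \eta \rangle = 0$ on $\partial\Omega$ in the weak (conormal trace) sense. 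Setting $w := u_1 - u_2$, the singular right-hand sides cancel, so $w$ solves the homogeneous problem $\divergence(\sigma \nabla w) = 0$ on $\Omega$, $\langle \sigma \nabla w, \eta \rangle = 0$ on $\partial\Omega$.

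The crucial point is that, although neither $\chi_1 \cdot u^\infty$ nor $\chi_2 \cdot u^\infty$ belongs to $H^1(\Omega)$, their difference does. Since $\Omega^\infty_1$ and $\Omega^\infty_2$ both contain $x_0$ and $\chi_i \equiv 1$ on $\Omega^\infty_i$, the function $\chi_1 - \chi_2 \in H^1(\Omega)$ vanishes on a common open neighborhood $U$ of $x_0$; on $\Omega \setminus U$ the function $u^\infty$ is smooth with bounded gradient, so the product rule yields $(\chi_1 - \chi_2) \cdot u^\infty \in H^1(\Omega)$. Combined with $u^c_1, u^c_2 \in H^1(\Omega)$ this gives $w = (u^c_1 - u^c_2) + (\chi_1 - \chi_2) \cdot u^\infty \in H^1(\Omega)$.

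With $w \in H^1(\Omega)$ in hand, the homogeneous problem can be rewritten variationally as $\int_\Omega \langle \sigma \nabla w, \nabla v \rangle \, dV = 0$ for all $v \in H^1(\Omega)$ (Gauss--Green for fields in $H(\divergence)$ tested against $H^1$, using the vanishing divergence and the vanishing conormal trace). Testing with $v = w$ and using that $\sigma$ is symmetric positive definite forces $\nabla w = 0$ almost everywhere, hence $w$ is constant on the connected head domain $\Omega$. Therefore $u^c_1 + \chi_1 \cdot u^\infty$ and $u^c_2 + \chi_2 \cdot u^\infty$ differ by an element of $\mathbb{R} \cdot 1$, which is exactly the asserted identity $u^c_1 + \chi_1 \cdot u^\infty + \mathbb{R} \cdot 1 = u^c_2 + \chi_2 \cdot u^\infty + \mathbb{R} \cdot 1$.

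I expect the main obstacle to be the regularity bookkeeping in the second step: one must make sure that the cancellation of the singular parts genuinely produces an $H^1$ function — this is precisely where the hypotheses $x_0 \in \Omega^\infty_i$ and $\chi_i \equiv 1$ on $\Omega^\infty_i$ are used decisively — and one should justify that the distributional equation together with the weak Neumann condition may be tested against all of $H^1(\Omega)$ rather than only against $C^\infty_c(\Omega)$; this is standard (e.g. via density of smooth functions that vanish near $x_0$ in $H^1(\Omega)$, using that a point has zero $H^1$-capacity in three dimensions, or via the weak definition of the conormal trace), but it is the step that needs the most care. The cancellation of the $\delta_{x_0}$ contributions and the coercivity argument are then routine.
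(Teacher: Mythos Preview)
Your proof is correct and takes a genuinely different route from the paper's. The paper never passes back through the original PDE \eqref{pde_volume}--\eqref{pde_boundary}; instead it reduces (without loss of generality) to the case where the second choice is the classical subtraction $\chi_2 = 1$, $\Omega_2^\infty = \Omega$, $\widetilde{\Omega}_2 = \emptyset$, and then computes $a\bigl(u^c_1 - u^c_2 + (\chi_1 - 1) u^\infty,\, v\bigr)$ directly by expanding $a(u^c_1, v) - a(u^c_2, v) = l_1(v) - l_2(v)$ with the explicit right-hand side functionals \eqref{localized_subtraction_rhs} and showing via partial integration that all terms cancel. This stays entirely inside the $H^1$ variational framework, so no density or capacity argument is ever needed. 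Your approach, by contrast, invokes the lemma of Appendix~\ref{weak_formulation_correct} as a black box to reconstitute the full boundary-value problem for each $u_i$, subtracts, and appeals to uniqueness for the homogeneous Neumann problem; this is cleaner conceptually and makes the independence statement look almost tautological, but the price is precisely the lifting step you flag (passing from $C^\infty_c(\Omega)$ test functions and a weak conormal trace back to testing against all of $H^1(\Omega)$). Both arguments rest on the same key regularity observation, namely $(\chi_1 - \chi_2)\, u^\infty \in H^1(\Omega)$ because $\chi_1 - \chi_2$ vanishes on a neighborhood of $x_0$.
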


\begin{proof}
It obviously suffices to show that any localized subtraction approach produces the same solution family as the ordinary subtraction approach. We can thus without loss of generality assume $\Omega^\infty_2 = \Omega$, $\widetilde{\Omega}_2 = \emptyset$ and $\chi_2 = 1$. 

Now note that
\[
u^c_1 - u^c_2 + (\chi_1 - 1) \cdot u^\infty \in H^1(\Omega),
\]
since $\chi_1 - 1$ vanishes on an environment of $x_0$. Then we have for $v \in H^1(\Omega)$
\begin{multline}
a(u^c_1 - u^c_2 + (\chi_1 - 1) \cdot u^\infty, v) \\
= a(u^c_1, v) - a(u^c_2, v) + \int_{\Omega} \langle \sigma \nabla(\chi_1 - 1) u^\infty, \nabla v\rangle \, dV.
\label{diff_weak_solutions_loc_sub_sub}
\end{multline}
Since $u^c_1$ and $u^c_2$ are solutions to their respective weak problems, we get
\begin{multline*}
a(u^c_1, v) - a(u^c_2, v)
=  -\int_{\widetilde{\Omega}} \langle \sigma \nabla \left(\chi_1 \cdot u^\infty \right), \nabla v\rangle \, dV \\
- \int_{\partial \Omega^\infty} \langle\sigma^\infty \nabla u^\infty, \eta \rangle v \, dS
- \int_{\Omega^\infty} \langle \sigma^c \nabla u^\infty, \nabla v \rangle \, dV \\
+ \int_{\Omega} \langle \sigma^c \nabla u^\infty, \nabla v\rangle \, dV
+ \int_{\partial \Omega} \langle \sigma^\infty \nabla u^\infty, \eta\rangle v \, dS \\
= \left(\int_{\partial \Omega} \langle \sigma^\infty \nabla u^\infty, \eta\rangle v \, dS
-  \int_{\partial \Omega^\infty} \langle\sigma^\infty \nabla u^\infty, \eta \rangle v \, dS \right) \\
+ \int_{\Omega\setminus\Omega^\infty} \langle \sigma^c \nabla u^\infty, \nabla v\rangle \, dV
- \int_{\widetilde{\Omega}} \langle \sigma \nabla \left(\chi_1 \cdot u^\infty \right), \nabla v\rangle \, dV \\
= \int_{\Omega\setminus\Omega^\infty} \langle \sigma^\infty \nabla u^\infty, \nabla v\rangle \, dV 
+ \int_{\Omega\setminus\Omega^\infty} \langle \sigma^c \nabla u^\infty, \nabla v\rangle \, dV \\
- \int_{\widetilde{\Omega}} \langle \sigma \nabla \left(\chi_1 \cdot u^\infty \right), \nabla v\rangle \, dV \\
= \int_{\Omega\setminus\Omega^\infty} \langle \sigma \nabla \left(1 - \chi_1) u^\infty\right) , \nabla v\rangle \, dV.
\end{multline*}
Inserting this into \eqref{diff_weak_solutions_loc_sub_sub} and noting that $\chi_1 - 1 = 0$ on $\Omega^\infty$, we see that
\begin{equation*}
a(u_1^c - u_2^c + (\chi_1 - 1) \cdot u^\infty, v) = 0.
\end{equation*}
This implies that $(u_1^c + \chi_1 \cdot u^\infty) - (u_2^c + u^\infty) = c$ for some constant $c \in \mathbb{R}$, completing the proof.
\end{proof}

\section{Analytical expressions for EEG FEM}
\label{analytical_expressions}
\setcounter{figure}{0}
\setcounter{table}{0}
\setcounter{lemma}{0}
In the main paper, we noted that in the case of affine trial functions on tetrahedral meshes we compute the FEM right-hand side using analytical expression. We have to assemble three types of integrals, namely
\begin{align}
I_P =& \int_K \langle\sigma^c \nabla u^\infty, \nabla \varphi \rangle \, dV, \label{patch_integral_eeg_ana_chapter}\\
I_S =& \int_F \langle \sigma^\infty \nabla u^\infty , \eta \rangle \varphi \, dS, \label{surface_integral_eeg_ana_chapter}\\
I_T =& \int_K \langle \sigma \nabla \left(\chi \cdot u^\infty \right), \nabla \varphi\rangle \, dV \label{transition_integral_eeg_ana_chapter},
\end{align} 
where $K$ is some tetrahedron and $F$ is some triangle. We call $I_P$ a \textit{patch integral}, $I_S$ a \textit{surface integral} and $I_T$ a \textit{transition integral}. One can find analytical expressions for the patch integral and the surface integral in \cite{beltrachini_analytic_subtraction}. We slightly modified those expressions and used the techniques described in \cite{beltrachini_analytic_subtraction, wilton_analytic_formulas, graglia_analytic_formulas} to derive an analytical expression for the transition integral\footnote{If the reader is mainly interested in implementing the formulas for an analytic computation of the transition integral, all necessary information is given in the following. But since this work is an extension of \cite{beltrachini_analytic_subtraction, wilton_analytic_formulas, graglia_analytic_formulas}, we strongly recommend first studying these papers.}. To set the stage for these formulas we first have to introduce some notation. In the following, we always assume a dipolar source at position $x_0$ with moment $M$, and we assume $x_0 \not \in F$.

Let $F$ be some triangle. Let $p_1, p_2, p_3$ be an arbitrary numbering of the vertices of $F$. Then set
\begin{align*}
s_1 =& \frac{p_3 - p_2}{\|p_3 - p_2\|}, \gamma_1 = \|p_3 - p_2\|,\\
s_2 =& \frac{p_1 - p_3}{\|p_1 - p_3\|}, \gamma_2 = \|p_1 - p_3\|,\\
s_3 =& \frac{p_2 - p_1}{\|p_2 - p_1\|}, \gamma_3 = \|p_2 - p_1\|.
\end{align*}
Now set $u = s_3, w = \frac{s_1 \times s_2}{\|s_1 \times s_2\|}$ and $v = w \times u$, and let 
\[
m_i = s_i \times w
\]
for $i \in \{1, 2, 3\}$. Then set
\begin{align*}
u_3 =& \langle u, p_3 - p_1\rangle,\\
v_3 =& \langle v, p_3 - p_1 \rangle,\\
u_0 =& \langle u, x_0 - p_1 \rangle,\\
v_0 =& \langle v, x_0 - p_1 \rangle,\\
w_0 =& -\langle w, x_0 - p_1 \rangle.
\end{align*}
Then define 
\begin{align*}
t_1 =& \frac{v_0 (u_3 - \gamma_3) + v_3 (\gamma_3 - u_0)}{\gamma_1},\\
t_2 =& \frac{u_0 v_3 - v_0 u_3}{\gamma_2},\\
t_3 =& v_0.
\end{align*}
and
\begin{align*}
\gamma_1^- =& -\frac{(\gamma_3 - u_0) (\gamma_3 - u_3) + v_0 v_3}{\gamma_1}\\
\gamma_1^+ =& \frac{(u_3 - u_0) (u_3 - \gamma_3) + v_3  (v_3 - v_0)}{\gamma_1}\\
\gamma_2^- =& -\frac{u_3 (u_3 - u_0) + v_3 (v_3 - v_0)}{\gamma_2}\\
\gamma_2^+ =& \frac{u_0 u_3 + v_0 v_3}{\gamma_2}\\
\gamma_3^- =& -u_0\\
\gamma_3^+ =& \gamma_3 - u_0.
\end{align*}
We refer to \cite{beltrachini_analytic_subtraction} for a geometric intuition behind these values (note that there is an error in the formula for $\gamma_2^-$ in \cite{beltrachini_analytic_subtraction}, compare e.g. against \cite{graglia_analytic_formulas}). Essentially, we have constructed a map of the triangle and computed the coordinates of its vertices and parametrizations of its edges. We then set for $1 \leq i \leq 3$
\begin{align}
R_i =& \sqrt{t_i^2 + w_0^2} \nonumber\\
R_i^{\pm} =& \sqrt{t_i^2 +w_0^2 + \left(\gamma_i^{\pm}\right)^2} \nonumber\\
f_i =& \begin{cases}
\ln\left(\frac{R_i^+ + \gamma_i^+}{R_i^- + \gamma_i^-}\right) & \text{if $\gamma_i^- \geq 0$}\\
\ln\left(\frac{R_i^- - \gamma_i^-}{R_i^+ - \gamma_i^+}\right) & \text{if $\gamma_i^- < 0$}
\end{cases} \label{f_i_definition}\\
\beta_i =& \arctan\left(\frac{t_i \gamma_i^+}{R_i^2 + |w_0| R_i^+}\right) \nonumber\\
&- \arctan\left(\frac{t_i \gamma_i^-}{R_i^2 + |w_0| R_i^-}\right) \nonumber\\
\beta =& \beta_1 + \beta_2 + \beta_3.\nonumber
\end{align}
Note in particular that our definition of $f_i$ (see equation \eqref{f_i_definition}) differs from the one given in \cite{beltrachini_analytic_subtraction, wilton_analytic_formulas, graglia_analytic_formulas}, where no case distinction is made and the first case is always taken. Carefully looking at the formulas, we see that $R_i^-$ is the distance of the source position $x_0$ to the point $p_{i +1}$ and $R_i^+$ is the distance from the point $p_{i + 2}$ (with $p_4 := p_1, p_5 := p_2$). Furthermore, $\gamma_i^-$ is the oriented distance from the projection of the source position onto the affine line defined by $p_{i +1}$ and $p_{i + 2}$, where the positive orientation is given by $p_{i + 2} - p_{i +1}$, to the point $p_{i + 1}$. Now assume that the source position lies (approximately) on a line defined by some edge of the triangle and happens to lie on the ray defined by the edge and the positive direction. In this case, we have $R_i^+ \approx -\gamma_i^+$ and $R_i^- \approx -\gamma_i^-$ for this edge, and the computation of $f_i$ using the first case becomes unstable or even undefined. Taking a close look at the derivation of the formulas in \cite{beltrachini_analytic_subtraction, wilton_analytic_formulas, graglia_analytic_formulas}, we see that $f_i$ arises from the computation of the integral
\[
\int_{\gamma_i^-}^{\gamma_i^+} \frac{1}{\sqrt{w_0^2 + t_i^2 + t^2}} \, dt.
\]
A straightforward computation reveals that for $a > 0$ we have
\[
\frac{d}{dx}\left(\frac{1}{2}\ln\left(\frac{\sqrt{x^2 + a} + x}{\sqrt{x^2 + a} - x}\right) \right) = \frac{1}{\sqrt{x^2 + a}},
\]
which can be used for the computation of the aforementioned integral. Expanding the argument of the logarithm by $\frac{\sqrt{x^2 + a} + x}{\sqrt{x^2 + a} + x}$ leads to the first expression for $f_i$ given in equation \eqref{f_i_definition}, while expanding with $\frac{\sqrt{x^2 + a} - x}{\sqrt{x^2 + a} - x}$ leads to the second expression. If $w_0 \neq 0$ or $t_i \neq 0$, the computation of $f_i$ is unproblematic. Otherwise the assumption $x_0 \not \in F$ forces either $\gamma_i^+, \gamma_i^- > 0$ or $\gamma_i^+, \gamma_i^- < 0$. In the first case, the first expression for $f_i$ provides a stable way to compute $f_i$, and in the second case, the second expression for $f_i$ provides a stable way to compute $f_i$. The formula for $f_i$ given above thus avoids problems arising from the case where the source position lies on the line defined by some edge of the triangle.

Now set
\begin{equation*}
a = \begin{pmatrix}
-\frac{1}{\gamma_3} \\
\frac{1}{\gamma_3} \\
0
\end{pmatrix},
b = \begin{pmatrix}
\frac{\frac{u_3}{\gamma_3} - 1}{v_3} \\
-\frac{u_3}{\gamma_3 v_3}\\
\frac{1}{v_3}
\end{pmatrix},
\varphi(u_0, v_0) = \begin{pmatrix}
1\\
0\\
0
\end{pmatrix}
 + u_0 a + v_0 b.
\end{equation*}

We can now give the analytical formulas for the transition integrals (see equation \eqref{transition_integral_eeg_ana_chapter}).

\begin{lemma}
We want to compute the integral
\[
\int_K \langle \sigma \nabla \left(\chi \cdot u^\infty \right), \nabla \varphi\rangle \, dV,
\]
where $K$ is some tetrahedron, $\varphi$ is a nodal Lagrange basis function, and $\chi$ is affine on $K$. Let $F_1, F_2, F_3, F_4$ be the faces of the tetrahedron. Let $1 \leq j \leq 4$. Let $p_1(F_j), p_2(F_J), p_3(F_j)$ be some numbering of the vertices of $F_j$. Then set 
\[
c(F_j) = 
\begin{pmatrix}
\chi(p_1(F_j))\\
\chi(p_2(F_j))\\
\chi(p_3(F_j))
\end{pmatrix}.
\]
Furthermore, let $M(F_j) = (M_0, M_u, M_v) \in \mathbb{R}^{3\times 3}$, where
\begin{align*}
M_0 =& \sign(w_0) \beta w - \sum_{i = 1}^3 f_i m_i,\\
M_{u} =& \sum_{i = 1}^3 \langle u, s_i\rangle \left(f_i t_i s_i - (R_i^+ - R_i^-) m_i\right)\\
& - |w_0| \beta u - \left(\sum_{i = 1}^3 \langle u, m_i\rangle f_i\right) w_0 w,\\
M_{v} =& \sum_{i = 1}^3 \langle v, s_i\rangle \left(f_i t_i s_i - (R_i^+ - R_i^-) m_i\right)\\
& - |w_0| \beta v - \left(\sum_{i = 1}^3 \langle u, m_i\rangle f_i\right) w_0 w,
\end{align*}
where it is understood that all values in the above equations depend on the face $F_j$ and are computed as defined above.
Then set 
\[
I(F_j) = M(F_j) \cdot
\begin{pmatrix}
\langle \varphi(u_0, v_0), c(F_j)\rangle\\
\langle a, c(F_j)\rangle\\
\langle b, c(F_j)\rangle
\end{pmatrix}.
\]
Now let $\mu_K : \widehat{K} \rightarrow K$ be the affine map from the reference tetrahedron to $K$. Let $A \in \mathbb{R}^{3 \times 3}$ be its (constant) jacobian. Let $\widehat{\varphi}_1, \widehat{\varphi}_2, \widehat{\varphi}_3, \widehat{\varphi}_4$ be the shape functions on the reference element and let $G = (\nabla \widehat{\varphi}_1, \nabla \widehat{\varphi}_2, \nabla \widehat{\varphi}_3, \nabla \widehat{\varphi}_4) \in \mathbb{R}^{3 \times 4}$ be the matrix containing their (constant) jacobians. Denote by $\eta(F_j)$ the (constant) unit outer normal of the face $F_j$.

Then we have
\begin{multline*}
 \begin{pmatrix}
\int_K \langle \sigma \nabla \left(\chi \cdot u^\infty \right), \nabla \varphi_1\rangle \, dV\\
\int_K \langle \sigma \nabla \left(\chi \cdot u^\infty \right), \nabla \varphi_2\rangle \, dV\\
\int_K \langle \sigma \nabla \left(\chi \cdot u^\infty \right), \nabla \varphi_3\rangle \, dV\\
\int_K \langle \sigma \nabla \left(\chi \cdot u^\infty \right), \nabla \varphi_4\rangle \, dV
\end{pmatrix}
\\=
\frac{\left(\sigma A^{-T} G\right)^T}{4 \pi \sigma^\infty}
\sum_{j = 1}^4 \langle I(F_j), M \rangle \eta(F_j).
\end{multline*}
\end{lemma}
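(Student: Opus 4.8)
The plan is to strip off the element-wise constant factors, convert the volume integral over $K$ into a sum of surface integrals over its four faces via the divergence theorem, and then evaluate those surface integrals in closed form using the analytic potential-integral techniques of \cite{wilton_analytic_formulas, graglia_analytic_formulas, beltrachini_analytic_subtraction}. For the first part, observe that on a single element $K$ the conductivity $\sigma$ (which is piecewise constant) and the gradients $\nabla \varphi_k$ of the affine nodal shape functions are constant, with $\nabla\varphi_k = A^{-T}\nabla\widehat\varphi_k$ by the chain rule for the affine map $\mu_K$. Setting $V \coloneqq \int_K \nabla(\chi\cdot u^\infty)\,dV \in \mathbb{R}^3$, we then have $\int_K\langle\sigma\nabla(\chi u^\infty),\nabla\varphi_k\rangle\,dV = (\nabla\varphi_k)^T\sigma V$; stacking over $k = 1,\dots,4$ and using the symmetry of $\sigma$ produces exactly the prefactor $(\sigma A^{-T}G)^T$ applied to $V$. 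It thus suffices to show that $V = \frac{1}{4\pi\sigma^\infty}\sum_{j=1}^4\langle I(F_j),M\rangle\,\eta(F_j)$.

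Next I would apply the divergence theorem. Since $K$ is a transition element we have $x_0 \notin \overline{K}$ --- the hypothesis $x_0 \notin F_j$ even rules out $x_0 \in \partial K$ --- so $\chi\cdot u^\infty$ is smooth in a neighbourhood of $\overline{K}$. Applying the divergence theorem componentwise to the fields $\chi u^\infty e_i$ then gives $V = \int_{\partial K}\chi u^\infty\,\eta\,dS = \sum_{j=1}^4 \eta(F_j)\int_{F_j}\chi\cdot u^\infty\,dS$. The crucial point is that this representation involves neither $\nabla u^\infty$ nor $\nabla\chi$, i.e. it has eliminated the singular and the non-smooth objects; only the smooth, bounded integrand $\chi u^\infty$ remains on each $F_j$.

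It then remains to reduce each face integral to the building blocks $M_0, M_u, M_v$. Insert the isotropic closed form $u^\infty(y) = \frac{1}{4\pi\sigma^\infty}\,\frac{\langle M,\, y - x_0\rangle}{\|y - x_0\|^3}$. Since $\chi$ restricted to the plane of $F_j$ is affine, and since $a$ and $b$ encode the in-plane gradients of the barycentric coordinate functions while $\varphi(u_0,v_0)$ collects the barycentric coordinates of the orthogonal projection $x_0'$ of $x_0$ onto that plane, one verifies the pointwise identity $\chi(y) = \langle\varphi(u_0,v_0),c(F_j)\rangle + \langle a,c(F_j)\rangle\,\langle u, y - x_0\rangle + \langle b,c(F_j)\rangle\,\langle v, y - x_0\rangle$ for all $y$ in the plane of $F_j$. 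Substituting and collecting terms turns $\int_{F_j}\chi u^\infty\,dS$ into $\frac{1}{4\pi\sigma^\infty}\langle I(F_j),M\rangle$, once we identify the columns $M_0, M_u, M_v$ of $M(F_j)$ with the vector surface integrals $\int_{F_j}\frac{y-x_0}{\|y-x_0\|^3}\,dS$, $\int_{F_j}\langle u, y-x_0\rangle\frac{y-x_0}{\|y-x_0\|^3}\,dS$ and $\int_{F_j}\langle v, y-x_0\rangle\frac{y-x_0}{\|y-x_0\|^3}\,dS$. Combined with the previous two steps, this is precisely the asserted formula.

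The main obstacle is the last step: establishing the displayed closed forms for $M_0, M_u, M_v$. Here $M_0 = \nabla_{x_0}\!\int_{F_j}\|y-x_0\|^{-1}\,dS$, whose classical Wilton/Graglia evaluation in terms of the per-edge logarithmic terms $f_i$, the solid-angle terms $\beta_i$, and the signed height $w_0$ yields $M_0 = \sign(w_0)\,\beta\, w - \sum_{i=1}^3 f_i m_i$; the integrals $M_u$ and $M_v$ arise analogously by differentiating the vector single-layer integrals $\int_{F_j}\frac{y-x_0}{\|y-x_0\|}\,dS$ along $u$ and $v$ and re-expressing the result in the local $(u,v,w)$ frame, which is where the $R_i^\pm$ and $t_i$ contributions enter. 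The delicate points, and the reason this step carries the bulk of the computation, are: keeping the per-face frame $(u,v,w)$ straight and accounting for the fact that $w$ generally does not coincide with $\eta(F_j)$ (hence the $\sign(w_0)$ and $|w_0|$ bookkeeping); justifying the case distinction in the definition of $f_i$ in \eqref{f_i_definition}, which is needed to avoid catastrophic cancellation when $x_0$ lies close to the line through an edge; and checking that the parametrization $(t_i,\gamma_i^\pm,R_i^\pm)$ is consistent with the one in \cite{beltrachini_analytic_subtraction, graglia_analytic_formulas} (up to the sign correction in $\gamma_2^-$ noted above). This is exactly the part that extends \cite{beltrachini_analytic_subtraction, wilton_analytic_formulas, graglia_analytic_formulas}.
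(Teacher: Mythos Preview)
Your proposal is correct and follows essentially the same route as the paper: factor out the constant $\sigma$ and the constant gradients $\nabla\varphi_k = A^{-T}\nabla\widehat\varphi_k$, apply the divergence theorem to reduce $\int_K\nabla(\chi u^\infty)\,dV$ to face integrals $\sum_j\eta(F_j)\int_{F_j}\chi u^\infty\,dS$, expand $\chi$ affinely on each face in the local $(u,v)$ coordinates centered at the projection of $x_0$, and identify the three resulting surface integrals with $M_0,M_u,M_v$ via the Wilton--Graglia formulas. The paper carries out exactly these steps, introducing the map $\Phi_\rho$ explicitly to make the identification $z_1=\langle u,y-x_0\rangle$, $z_2=\langle v,y-x_0\rangle$ precise, and then citing equations (25)--(32) of \cite{graglia_analytic_formulas} rather than deriving the closed forms for $M_0,M_u,M_v$ from scratch; your description of $M_u,M_v$ as directional derivatives of single-layer integrals is a slightly different bookkeeping device for the same computation.
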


\medskip

\begin{proof}
We have
\begin{multline*}
 \begin{pmatrix}
\int_K \langle \sigma \nabla \left(\chi \cdot u^\infty \right), \nabla \varphi_1\rangle \, dV\\
\int_K \langle \sigma \nabla \left(\chi \cdot u^\infty \right), \nabla \varphi_2\rangle \, dV\\
\int_K \langle \sigma \nabla \left(\chi \cdot u^\infty \right), \nabla \varphi_3\rangle \, dV\\
\int_K \langle \sigma \nabla \left(\chi \cdot u^\infty \right), \nabla \varphi_4\rangle \, dV
\end{pmatrix}
\\= 
 \begin{pmatrix}
\nabla \varphi_1^T\\
\nabla \varphi_2^T\\
\nabla \varphi_3^T\\
\nabla \varphi_4^T
\end{pmatrix}
\cdot \sigma \cdot \int_{K} \nabla\left(\chi \cdot u^\infty\right)\, dV \\
= \frac{\left(\sigma A^{-T} G\right)^T}{4 \pi \sigma^\infty}
\int_{K} \nabla \left(\chi \cdot \frac{\langle M, x - x_0\rangle}{\|x - x_0\|^3} \right) \, dV,
\end{multline*}
where in the last line we have used the chain rule, the symmetry of $\sigma$, and the definition of $u^\infty$. Now the divergence theorem yields
\begin{multline*}
\int_{K} \nabla \left(\chi \cdot \frac{\langle M, x - x_0\rangle}{\|x - x_0\|^3} \right) \, dV\\
= \sum_{j = 1}^4 \int_{F_j} \chi \cdot \frac{\langle M, x - x_0\rangle}{\|x - x_0\|^3} \cdot \eta(F_j) \, dS\\
= \sum_{j = 1}^4 \eta(F_j) \langle M, \int_{F_j} \chi \cdot \frac{x - x_0}{\|x - x_0\|^3} \, dS \rangle.
\end{multline*}
We thus need to compute integrals of the form
\[
\int_{F} \chi \cdot \frac{x - x_0}{\|x - x_0\|^3} \, dS.
\]
for a triangle $F$. Let $p_1, p_2, p_3$ denote the vertices of this triangle and let $\varphi_i$ be the unique affine function that is $1$ on $p_i$ and zero on the remaining vertices. Since by construction $\chi$ is affine on the face $F$, we have
\[
\chi = \sum_{i = 1}^3 \chi(p_i) \cdot \varphi_i = \langle\begin{pmatrix}
\varphi_1\\
\varphi_2\\
\varphi_3
\end{pmatrix}, c(F)\rangle.
\]
Now set $O = (u, v, w)$ and define
\[
\Phi : \mathbb{R}^2 \rightarrow \mathbb{R}^3; (z_1, z_2) \mapsto
p_1 + O \cdot \begin{pmatrix}
z_1\\
z_2\\
0
\end{pmatrix}.
\]
Let $\widehat{F}' := \conv\{(0, 0), (\gamma_3, 0), (u_3, v_3)\}$. The values defined above were chosen in such a way that $\Phi|_{\widehat{F}'}$ defines a map of the triangle $F$. To make the following computation easier we follow \cite{beltrachini_analytic_subtraction} and introduce an additional translation 
\[
\tau_\rho : \mathbb{R}^2 \rightarrow \mathbb{R}^3; z \mapsto z + \rho,
\]
where $\rho = (u_0, v_0)$ are the coordinates of the projection of the source position $x_0$ onto the plane defined by the triangle. We now set $\widehat{F} = \tau_\rho^{-1}(\widehat{F}')$ and set $\Phi_\rho = \Phi \circ \tau_\rho$. Then $\Phi_\rho|_{\widehat{F}}$ defines a map of $F$.
Note that the gramian of this map is $1$, and hence we can compute
\[
\int_{F} \chi \cdot \frac{x - x_0}{\|x - x_0\|^3} \, dS.
= \int_{\widehat{F}} \chi(\Phi_\rho(z)) \frac{\Phi_\rho(z) - x_0}{\|\Phi_\rho(z) - x_0\|^3} \, dz.
\]
One can easily check that
\[
\begin{pmatrix}
\varphi_1\\
\varphi_2\\
\varphi_3
\end{pmatrix}(\Phi(z_1, z_2))
= (e_1, a, b) \cdot \begin{pmatrix}
1\\
z_1\\
z_2
\end{pmatrix},
\]
where $e_1 = (1, 0, 0)$ is the first unit vector, see e.g. \cite{beltrachini_analytic_subtraction}, and hence
\[
\begin{pmatrix}
\varphi_1\\
\varphi_2\\
\varphi_3
\end{pmatrix}(\Phi_\rho(z_1, z_2))
= \varphi(u_0, v_0) + z_1 a + z_2 b,
\]
where $\varphi(u_0, v_0)$ is the value defined just before the statement of the lemma.
Putting this together, we get
\begin{multline*}
\int_{F} \chi \cdot \frac{x - x_0}{\|x - x_0\|^3} \, dS
= \int_{\widehat{F}} \chi(\Phi_\rho(z)) \frac{\Phi_\rho(z) - x_0}{\|\Phi_\rho(z) - x_0\|^3} \, dz \\
= \int_{\widehat{F}} \frac{\Phi_\rho(z) - x_0}{\|\Phi_\rho(z) - x_0\|^3} \, dz \cdot \langle \varphi(u_0, v_0), c(F)\rangle \\
+  \int_{\widehat{F}} z_1 \frac{\Phi_\rho(z) - x_0}{\|\Phi_\rho(z) - x_0\|^3} \, dz \cdot \langle a, c(F)\rangle \\
+  \int_{\widehat{F}} z_2 \frac{\Phi_\rho(z) - x_0}{\|\Phi_\rho(z) - x_0\|^3} \, dz \cdot \langle b, c(F)\rangle.
\end{multline*}
Hence we need to compute the values
\begin{align*}
M_0 &= \int_{\widehat{F}} \frac{\Phi_\rho(z) - x_0}{\|\Phi_\rho(z) - x_0\|^3} \, dz \\
M_u &= \int_{\widehat{F}} z_1 \frac{\Phi_\rho(z) - x_0}{\|\Phi_\rho(z) - x_0\|^3} \, dz\\
M_v &= \int_{\widehat{F}} z_2 \frac{\Phi_\rho(z) - x_0}{\|\Phi_\rho(z) - x_0\|^3} \, dz.
\end{align*}
The value of $M_0$ reported in the statement of the lemma is then a direct consequence of combining (25), (26), and (30) from \cite{graglia_analytic_formulas} (note that we followed the convention of \cite{beltrachini_analytic_subtraction}, meaning our definition of $w_0$ and the definition of \cite{graglia_analytic_formulas} differ by a factor of $-1$). Similarly, the values of $M_u$ and $M_v$ can be computed by combining (25), (27), (31), and (32) from \cite{graglia_analytic_formulas}. This completes the derivation.
\end{proof}

\section{Choosing integration orders}
\label{integration_order_chapter}
\setcounter{figure}{0}
\setcounter{table}{0}
\setcounter{lemma}{0}
In the main paper it was described that if we want to solve the EEG forward problem and are given a tetrahedral mesh, piecewise affine test functions, and isotropic conductivity in the element containing the source position, we can assemble the localized subtraction FEM right-hand side analytically. If these conditions for the EEG forward problem are not met, e.g. if we use a hexahedral mesh, or if we want to solve the MEG forward problem, we resort to numerical integration. This of course raises the question of choosing suitable integration orders, which we want to discuss in this section.

Let $K$ be a mesh element and let $F$ be a face of a mesh element. In the EEG case, we are concerned with the integrals
\begin{align}
I_P =& \int_K \langle\sigma^c \nabla u^\infty, \nabla \varphi \rangle \, dV, \label{patch_integral_eeg}\\
I_S =& \int_F \langle \sigma^\infty \nabla u^\infty , \eta \rangle \varphi \, dS, \label{surface_integral_eeg}\\
I_T =& \int_K \langle \sigma \nabla \left(\chi \cdot u^\infty \right), \nabla \varphi\rangle \, dV \label{transition_integral_eeg},
\end{align}
where $\varphi$ is a test function. We call $I_P$ a \textit{patch integral}, $I_S$ a \textit{surface integral} and $I_T$ a \textit{transition integral}. In the MEG case, we are concerned with the integrals
\begin{align}
F_P =& \int_K \sigma^c(y) \nabla u^\infty(y) \times \frac{x - y}{\|x - y\|^3} \, dV(y), \label{patch_flux_meg}\\
F_S =& \int_F \sigma^\infty u^\infty(y) \cdot \eta(y) \times \frac{x - y}{\|x - y\|^3} \, dS(y), \label{surface_flux_meg}\\
F_T =& \int_K \sigma(y) \nabla\left(\chi \cdot u^\infty\right)(y) \times \frac{x - y}{\|x - y\|^3} \, dV(y), \label{transition_flux_meg},
\end{align}
where we call $F_P$ a \textit{patch flux}, $F_S$ a \textit{surface flux} and $F_T$ a \textit{transition flux}.

In \cite{beltrachini_analytic_subtraction}, the author investigated the error introduced by numerical quadrature for the patch integrals and surface integrals. In particular, the author shows experimentally that the central quantity for assessing the accuracy of the numerical quadrature is the ratio $d/a$, where $d$ is the distance of the source position to the element under consideration and $a$ is the edge length of this element. Here, smaller ratios demand higher integration orders to achieve the same accuracy.

This observation is indeed a mathematical necessity. When performing numerical quadrature, one typically chooses a map of the element $K$ (resp. the face $F$) to pull back the integral to the reference element, and then performs the numerical integration on the reference element. A straightforward computation now shows that when scaling the source position and the element $K$ (resp. the face $F$) with a scalar $\lambda \neq 0$, the integrands \eqref{patch_integral_eeg}-\eqref{transition_integral_eeg}, when pulled back to the reference element, are scaled by a factor of $1/\lambda$, at least for affine and multilinear test functions. This implies that the analytic and numeric integrals are also scaled by a factor of $1/\lambda$. This in turn implies that the relative error introduced by numerical integration is scale-invariant. If the coil position is also scaled, another straightforward computation shows that the integrands \eqref{patch_flux_meg}-\eqref{transition_flux_meg} on the reference element are scaled by a factor of $1/\lambda^2$, again implying scale-invariance of the relative error introduced by numerical integration.

This observation enables us to specify rules for choosing integration orders independently of a concrete mesh since the scale-invariance of the relative error enables us to investigate the accuracy of different integration orders at a reference scale.

Our tests showed that, when compared to the dipole distance or the maximum edge length, the element shape and the dipole orientation only had a minor influence on the relative integration error. Among the elements of fixed maximum edge length, the equilateral ones tended to produce the largest errors, which we suppose is a consequence of them having the lowest density of quadrature points\footnote{The underlying heuristic here is that a higher density of quadrature points should enable a more accurate numerical integration. The density of quadrature points is given by the number of quadrature points divided by the element volume. When performing numerical integration by pulling back to a reference element, the number of quadrature points for a given integration order is independent of the element shape, and hence the quadrature point density for a given integration order only depends on the reciprocal of the element volume.}. We will thus derive quadrature rules by investigating the relative error introduced by numerical integration for a single equilateral element of side length $a = 1$ for different dipoles at distance $d = d/a$.

More concretely, we used equilateral tetrahedral and hexahedral elements and computed the relative error introduced by numerical integration for different integration orders and different values of $\frac{d}{a}$. In particular, we used the tetrahedron with vertices
\begin{align*}
p_1 &= \begin{pmatrix}
0, & -\frac{1}{2}, & -\frac{\sqrt{3}}{6}
\end{pmatrix},\\
p_2 &= \begin{pmatrix}
0, & \frac{1}{2}, & -\frac{\sqrt{3}}{6}
\end{pmatrix},\\
p_3 &= \begin{pmatrix}
0, & 0, & -\frac{\sqrt{3}}{3}
\end{pmatrix},\\
p_4 &= \begin{pmatrix}
-\frac{\sqrt{6}}{3}, & 0, & 0
\end{pmatrix},
\end{align*}
and for the surface integrals the triangle with vertices $p_1, p_2, p_3$. 

Furthermore, we took the cube with vertices
\begin{align*}
q_1 &= \begin{pmatrix}
-1, & -0.5, -0.5
\end{pmatrix}\\
q_2 &= \begin{pmatrix}
0, & -0.5, & -0.5
\end{pmatrix}\\
q_3 &= \begin{pmatrix}
-1, & 0.5, & -0.5
\end{pmatrix}\\
q_4 &= \begin{pmatrix}
0, & 0.5, & -0.5
\end{pmatrix}\\
q_5 &= \begin{pmatrix}
-1, & -0.5, & 0.5
\end{pmatrix}\\
q_6 &= \begin{pmatrix}
0, & -0.5, & 0.5
\end{pmatrix}\\
q_7 &= \begin{pmatrix}
-1, & 0.5, & 0.5
\end{pmatrix}\\
q_8 &= \begin{pmatrix}
0, & 0.5, & 0.5
\end{pmatrix}
\end{align*}
and for the surface integrals the square possessing the vertices $q_2, q_4, q_6, q_8$. We used the source position
\[
x_0 = d \cdot
\begin{pmatrix}
1, & 0.1, & 0.1
\end{pmatrix}.
\]
Note that we did not choose $(1, 0, 0)$, since for this position some integration orders perform unusually well due to what we suppose are symmetry reasons. Furthermore, we used the point
\[
\begin{pmatrix}
20, 0, 0
\end{pmatrix}
\]
as a coil position for the computation of the magnetic field. Our experiments suggest that, as long as the distance of the source to the element is significantly smaller than the distance of the coil to the element, the concrete coil position only has a minor influence on the error. Furthermore, with increasing distance of the coil to the element, the error tended to decrease. Considering the setup of modern MEG machines, it thus seems justified to use a single coil at a distance of $2$cm from the element in our experiments. Relative errors were computed by comparing against pseudo-analytic integrals obtained by using an integration order of 50. For the transition integrals and fluxes, we defined $\chi$ to be the affine resp. the multilinear function which is 1 on all vertices belonging to the face closest to the dipole and 0 on the remaining vertices. The code used for the computations in this section was implemented as a DUNE module\footnote{\url{https://github.com/MalteHoel/local_integration_test}}.

Since we perform two vertex extensions, we can expect the distance-edge length ratios for the transition and surface integrals to be distributed around $2$. To check our intuition, we took \textit{mesh\_init} from the results section of the main paper and $1000$ dipoles at $0.99$ eccentricity. For each of these dipoles, we built the corresponding patch and computed for every element in the transition region the ratio d/a. The result is shown in Figure \ref{distance_edge_length_ratio_distribution}. We computed $a$ as the longest edge length and $d$ as the minimum value of the distances of the source position to the corners and the centers of the faces of the tetrahedron.

\begin{figure}[]
\centerline{\includegraphics[]{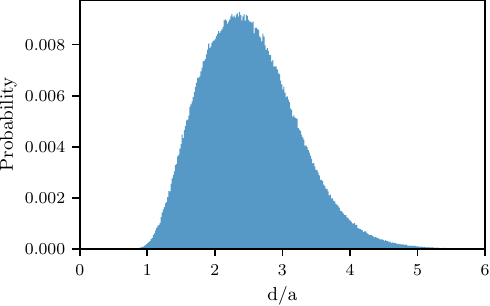}}
\caption{Histogram visualizing the relative frequency of distance-edge length ratios during the computation of surface and transition integrals (\eqref{surface_flux_meg} and \eqref{transition_flux_meg}) in a tetrahedral mesh. To generate this image, $1000$ dipoles at an eccentricity of $0.99$ were considered.} 
\label{distance_edge_length_ratio_distribution}
\end{figure}

We see that the distribution peaks around $2$, and only a very small fraction of element-source pairs have a ratio less than $1$, as we expected. Hence if we choose an integration order that is accurate for a ratio of $1$ (and slightly below $1$), we can expect the relative error for the surface and transition integrals to be  small.

We first want to derive integration orders for the MEG fluxes \eqref{patch_flux_meg}-\eqref{transition_flux_meg} in the case of a tetrahedral mesh. In Figures \ref{tet_meg_surface} and \ref{tet_meg_transition} we see the relative errors for different integration orders for the surface \eqref{surface_flux_meg} and transition fluxes \eqref{transition_flux_meg}. Based on this, we suggest using an integration order of $6$ for the surface integrals and an integration order of $5$ for the transition integrals.

\begin{figure}[]
\centerline{\includegraphics[]{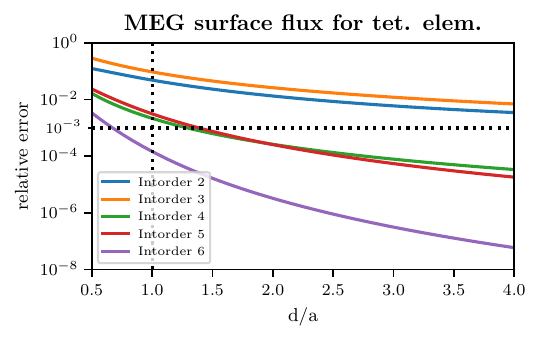}}
\caption{Relative error introduced by numerical integration for the MEG surface flux \eqref{surface_flux_meg} for a tetrahedral element for different integration orders. On the $x$-axis we have the distance-edge length ratio and on the $y$-axis we have the relative error.} 
\label{tet_meg_surface}
\end{figure}

\begin{figure}[]
\centerline{\includegraphics[]{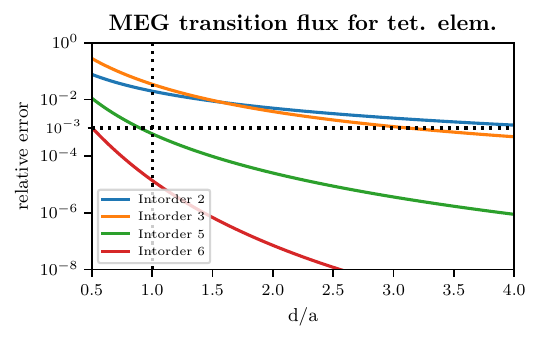}}
\caption{Relative error introduced by numerical integration of the MEG transition flux \eqref{transition_flux_meg} for a tetrahedral element for different integration orders. On the $x$-axis we have the distance-edge length ratio and on the $y$-axis we have the relative error} 
\label{tet_meg_transition}
\end{figure}

For the patch fluxes, we cannot expect d/a to be nicely bounded below. Instead, we suggest choosing an individual integration order for each element-source combination. More concretely, once d/a has been computed, one can use the results shown in Figure \ref{tet_meg_patch} to choose an appropriate integration order. This leads to the rule already described in table \ref{tet_meg_patch_table}. This rule gives accurate integration orders as long as $\frac{d}{a} \geq \frac{1}{6}$, which should suffice for all meshes and sources used in practical applications.

\begin{figure}[]
\centerline{\includegraphics[]{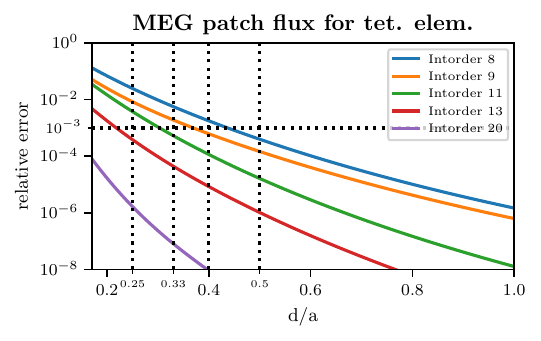}}
\caption{Relative error introduced by numerical integration of the MEG patch flux for \eqref{patch_flux_meg} a tetrahedral element for different integration orders. On the $x$-axis we have the distance-edge length ratio and on the $y$-axis we have the relative error} 
\label{tet_meg_patch}
\end{figure}

\addtolength{\tabcolsep}{-5pt}
\begin{table}
\caption{Tetrahedral MEG patch flux integration order}
\label{tet_meg_patch_table}
\begin{tabular}{|p{56pt}|p{36.25pt}|p{36.25pt}|p{36.25pt}|p{36.25pt}|p{36.25pt}|}
\hline
\small \mbox{$\rule{0cm}{0.3cm}\frac{\text{distance}}{\text{edge length}}$ $\left(\frac{d}{a}\right)$} & \small\mbox{$\frac{d}{a} \geq 0.5$} & \small \mbox{$\frac{d}{a} \geq 0.4$} & \small \mbox{$\frac{d}{a} \geq 0.33$} &  \small \mbox{$\frac{d}{a} \geq 0.25$} & \small \mbox{$\frac{d}{a} \geq 0.17$} \\ \hline
\rule{0cm}{0.3cm}\small Intorder		& \small 8	& \small 9 &  \small 11 & \small 13 & \small 20\\ \hline
\end{tabular}
\end{table}
\addtolength{\tabcolsep}{5pt}

We now consider the hexahedral case. In this case, we need to specify integration orders for the EEG and the MEG case. We again perform two vertex extensions to construct the patch. Note that we can expect a larger lower bound for the $d/a$-ratio for the surface and transition integrals since hexahedral meshes are typically generated from voxel-based segmentation followed by a potential geometry adaption. To check this assumption, we constructed a geometry-adapted 1 mm hexahedral mesh with a node shift factor of $0.33$, as described in \cite{wolters_geometry_adapted_hexahedra}. Similar to the tetrahedral case, we then took 1000 dipoles at an eccentricity of $0.99$ and constructed the corresponding patches and transition regions. For each dipole and each element inside the corresponding transition region, we then computed the ratio $d/a$, using the same approach as in the tetrahedral case. The result is shown in Figure \ref{hex_distance_edge_length_ratio_distribution}. We see that the distribution peaks at $2.5$, and that the vast majority of elements have a ratio above $1.5$. Moreover, not a single element had a ratio  below $1$. If we thus choose integration orders which are accurate up to a ratio of $1$, we can expect the numerical integration to only produce a small relative error for the surface and transition integrals (\ref{surface_integral_eeg}), (\ref{transition_integral_eeg}) and fluxes (\ref{surface_flux_meg}), (\ref{transition_flux_meg}).

Furthermore, it was discussed in the main paper that physiologically relevant sources have a distance of at least $1$ mm to the conductivity jump. Under the assumption that hexahedral meshes used in practice have an edge length of at most $2$ mm, an integration order that is accurate up to a ratio of $0.5$ thus suffices for an accurate computation of the patch integral \eqref{patch_integral_eeg} and patch flux \eqref{patch_flux_meg}.\footnote{Note however that we strongly recommend using a 1 mm mesh over a 2 mm mesh.}

\begin{figure}
\centerline{\includegraphics[]{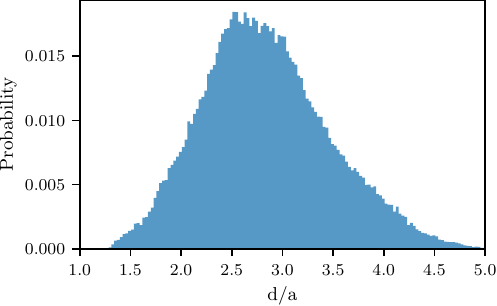}}
\caption{Histogram visualizing the relative frequency of distance-edge length ratios during the computation of surface and transition integrals (\eqref{surface_flux_meg} and \eqref{transition_flux_meg}) in a geometry-adapted hexahedral mesh. To generate this image, $1000$ dipoles at an eccentricity of $0.99$ were considered.}
\label{hex_distance_edge_length_ratio_distribution}
\end{figure}

Similar to the tetrahedral case, we computed the relative error introduced by numerical integration on a single hexahedral element for different $d/a$-values. Figures \ref{hex_eeg_surface}, \ref{hex_eeg_transition}, and \ref{hex_eeg_patch} show the corresponding results for the EEG case, while Figures \ref{hex_meg_surface}, \ref{hex_meg_transition}, and \ref{hex_meg_patch} show the results for the MEG case. 

Based on the discussion in the previous paragraphs and the results in these figures, we thus suggest choosing an integration order according to table \ref{hexahedral_integration_orders}. Note that we can be conservative when choosing orders since when using the localized subtraction approach the computation times are small enough to not be a serious concern anymore.

\begin{figure}[]
\centerline{\includegraphics[]{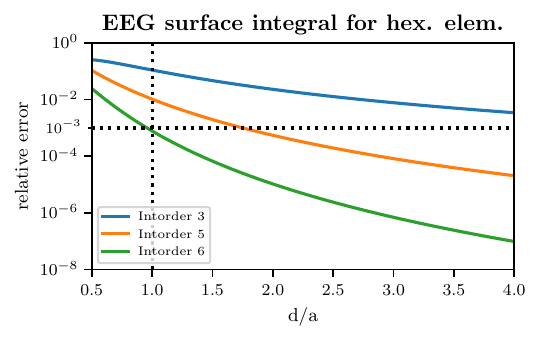}}
\caption{Relative error introduced by numerical integration of the EEG surface integral \eqref{surface_integral_eeg} for a hexahedral element for different integration orders. On the $x$-axis we have the distance-edge length ratio and on the $y$-axis we have the relative error} 
\label{hex_eeg_surface}
\end{figure}

\begin{figure}[]
\centerline{\includegraphics[]{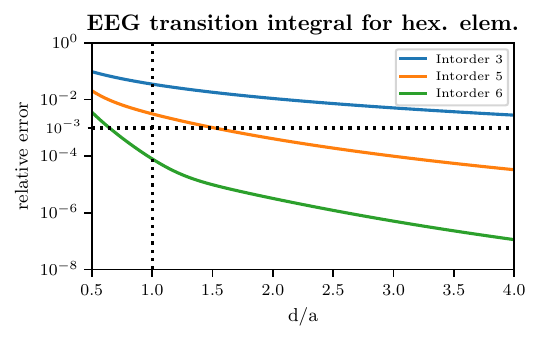}}
\caption{Relative error introduced by numerical integration of the EEG transition integral \eqref{transition_integral_eeg} for a hexahedral element for different integration orders. On the $x$-axis we have the distance-edge length ratio and on the $y$-axis we have the relative error} 
\label{hex_eeg_transition}
\end{figure}

\begin{figure}[]
\centerline{\includegraphics[]{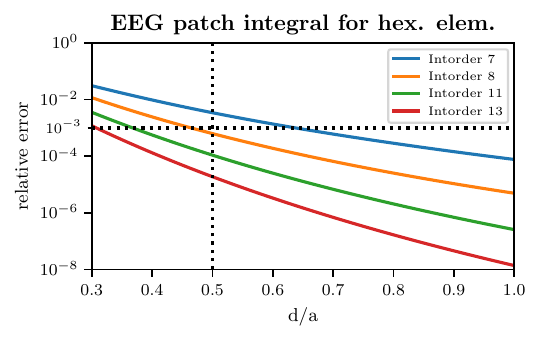}}
\caption{Relative error introduced by numerical integration of the EEG patch integral \eqref{patch_integral_eeg} for a hexahedral element for different integration orders. On the $x$-axis we have the distance-edge length ratio and on the $y$-axis we have the relative error} 
\label{hex_eeg_patch}
\end{figure}

\begin{figure}[]
\centerline{\includegraphics[]{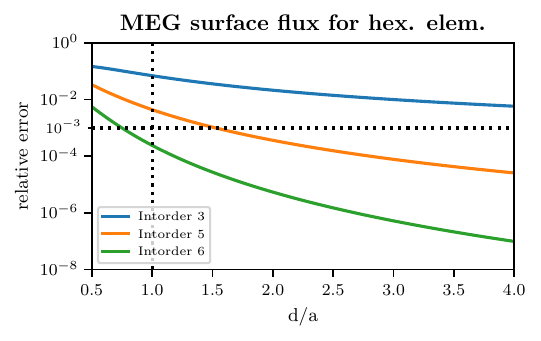}}
\caption{Relative error introduced by numerical integration of the MEG surface flux \eqref{surface_flux_meg} for a hexahedral element for different integration orders. On the $x$-axis we have the distance-edge length ratio and on the $y$-axis we have the relative error} 
\label{hex_meg_surface}
\end{figure}

\begin{figure}[]
\centerline{\includegraphics[]{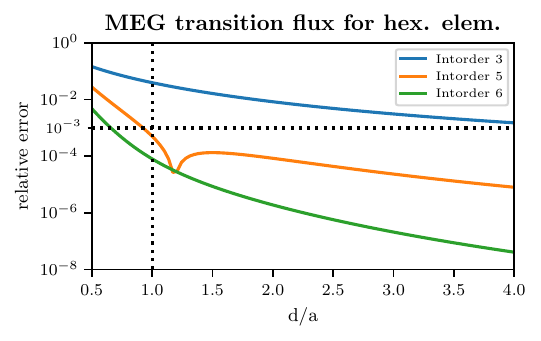}}
\caption{Relative error introduced by numerical integration of the MEG transition flux  \eqref{transition_flux_meg} for a hexahedral element for different integration orders. On the $x$-axis we have the distance-edge length ratio and on the $y$-axis we have the relative error} 
\label{hex_meg_transition}
\end{figure}

\begin{figure}[]
\centerline{\includegraphics[]{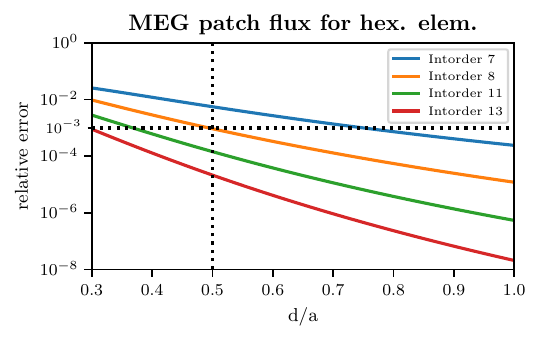}}
\caption{Relative error introduced by numerical integration of the MEG patch flux  \eqref{patch_flux_meg} for a hexahedral element for different integration orders. On the $x$-axis we have the distance-edge length ratio and on the $y$-axis we have the relative error} 
\label{hex_meg_patch}
\end{figure}

\begin{table}
\caption{Hexahedral integration orders}
\label{hexahedral_integration_orders}
\begin{tabular}{|p{50.65pt}|p{50.65pt}|p{50.65pt}|p{50.65pt}|}
\hline
& Surface & Transition & Patch \\ \hline
EEG & 6 & 6 & 8 \\ \hline
MEG & 6 & 6 & 8 \\ \hline
\end{tabular}
\end{table}

\section{Vertex Extension Comparison}
\label{vertex_extension_comparison_appendix_section}
\setcounter{figure}{0}
\setcounter{table}{0}
\setcounter{lemma}{0}
In the main paper, we investigated the influence of the number of vertex extensions on the accuracy of the localized subtraction approach. In particular, we referred to this appendix for the figures showing the corresponding error curves for \textit{mesh\_brain} and \textit{mesh\_skin}. The results for \textit{mesh\_brain} can be seen in Figure \ref{vertex_extension_comparison_brain_refined_tangential_eeg}, and the results for \textit{mesh\_skin} can be seen in Figure \ref{vertex_extension_comparison_skin_refined_tangential_eeg}.

\begin{figure}[]
\centerline{\includegraphics[]{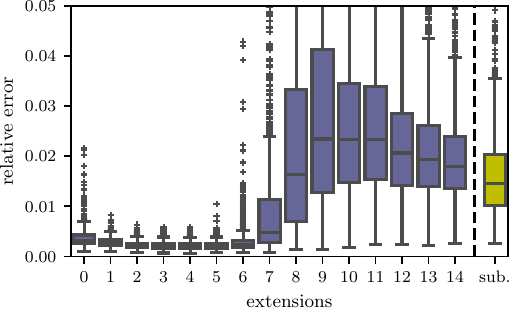}}
\caption{Relative error in the EEG case for 1000 tangential dipoles at 0.99 eccentricity for different numbers of vertex extensions during patch construction, computed using \textit{mesh\_brain} (see Figure \ref{meshes_refinement}\,(b)). The rightmost yellow boxplot shows the errors when employing the analytical subtraction approach from \cite{beltrachini_analytic_subtraction}.}
\label{vertex_extension_comparison_brain_refined_tangential_eeg}
\end{figure}

\begin{figure}[]
\centerline{\includegraphics[]{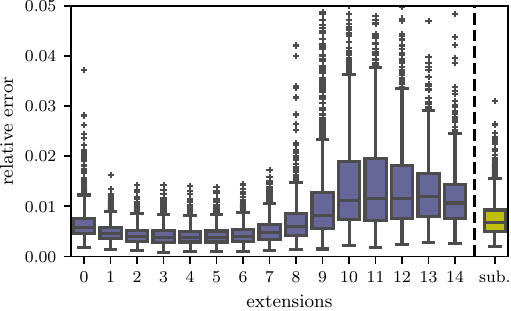}}
\caption{Relative error in the EEG case for 1000 tangential dipoles at 0.99 eccentricity for different numbers of vertex extensions during patch construction, computed using \textit{mesh\_skin} (see Figure \ref{meshes_refinement}\,(c)). The rightmost yellow boxplot shows the errors when employing the analytical subtraction approach from \cite{beltrachini_analytic_subtraction}.}
\label{vertex_extension_comparison_skin_refined_tangential_eeg}
\end{figure}

We also performed the vertex extension comparison for the MEG forward problem. The corresponding results for \textit{mesh\_init} can be seen in Figure \ref{vertex_extension_comparison_init_tangential_meg}. We see an increase in accuracy when going from 0 to 2 extensions. After that, the errors barely change. Similar to the EEG case, the initial error decrease is due to smaller patches forcing more singular behavior onto the correction potential. But in contrast to the EEG case, the MEG errors do not increase again as the patch becomes larger. This again demonstrates that the secondary magnetic field due to a dipolar source is predominantly influenced by the volume currents in the close proximity of the source.  Performing the corresponding experiment for \textit{mesh\_brain} and \textit{mesh\_skin} leads to the same qualitative results.

\begin{figure}[]
\centerline{\includegraphics[]{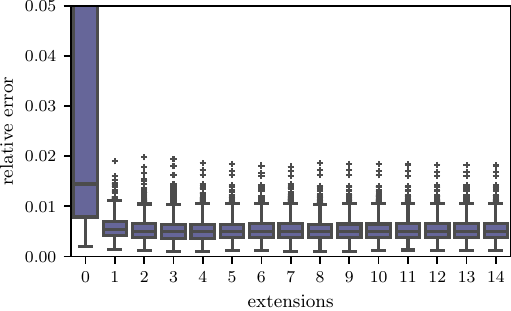}}
\caption{Relative error in the MEG case for 1000 tangential dipoles at 0.99 eccentricity for different numbers of vertex extensions during patch construction, computed using \textit{mesh\_init} (see Figure \ref{meshes_refinement}\,(a)).}
\label{vertex_extension_comparison_init_tangential_meg}
\end{figure}

\section{Accuracy comparison for different potential approaches}
\label{accuracy_comparison_appendix_section}
\setcounter{figure}{0}
\setcounter{table}{0}
\setcounter{lemma}{0}
In the main paper, we showed the accuracy comparison of the localized subtraction approach against the multipolar Venant and analytical subtraction approaches in the EEG case for radial sources and in the MEG cases for tangential sources in \textit{mesh\_init}. To keep the presentation focused, we moved the figures showing the remaining comparisons into this appendix. The comparison for tangential sources in the EEG case in \textit{mesh\_init} can be found in Figure \ref{accuracy_comparison_eeg_tangential_mesh_init}. Figures \ref{accuracy_comparison_eeg_radial_mesh_brain}, \ref{accuracy_comparison_eeg_tangential_mesh_brain} and \ref{accuracy_comparison_meg_tangential_mesh_brain} show the results for \textit{mesh\_brain}. Figures  \ref{accuracy_comparison_eeg_radial_mesh_skin}, \ref{accuracy_comparison_eeg_tangential_mesh_skin} and \ref{accuracy_comparison_meg_tangential_mesh_skin} show the results for \textit{mesh\_skin}.

\begin{figure}[]
\centerline{\includegraphics[]{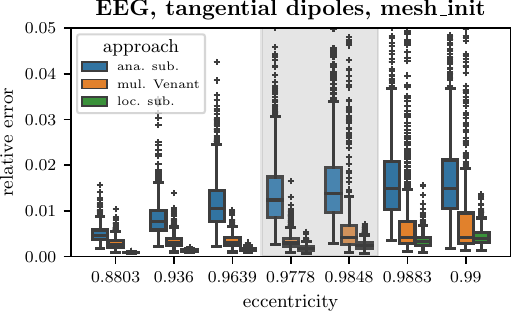}}
\caption{Accuracy comparison of EEG forward simulations using the analytical subtraction, multipolar Venant, and localized subtraction potential approaches for tangential dipoles at different eccentricities using \textit{mesh\_init} (see Figure \ref{meshes_refinement}\,(a)). The $y$-axis shows the relative error. The physiologically relevant sources at 1-2 mm distance from the CSF are highlighted.} 
\label{accuracy_comparison_eeg_tangential_mesh_init}
\end{figure}

\begin{figure}[]
\centerline{\includegraphics[]{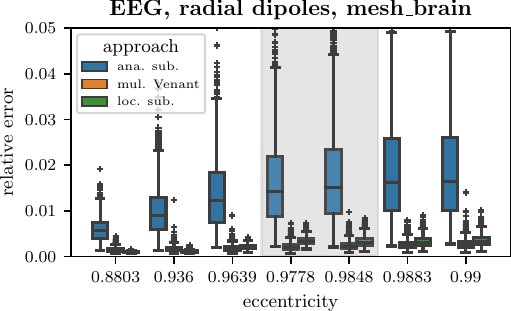}}
\caption{Accuracy comparison of EEG forward simulations using the analytical subtraction, multipolar Venant, and localized subtraction potential approaches for radial dipoles at different eccentricities using \textit{mesh\_brain} (see Figure \ref{meshes_refinement}\,(b)). The $y$-axis shows the relative error. The physiologically relevant sources at 1-2 mm distance from the CSF are highlighted.} 
\label{accuracy_comparison_eeg_radial_mesh_brain}
\end{figure}

\begin{figure}[]
\centerline{\includegraphics[]{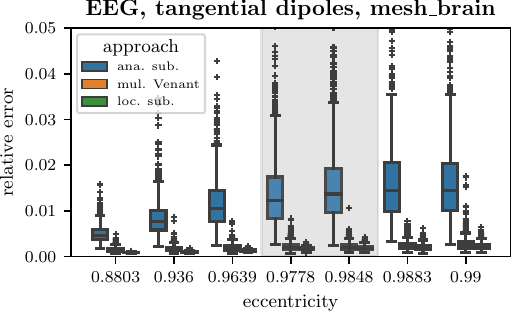}}
\caption{Accuracy comparison of EEG forward simulations using the analytical subtraction, multipolar Venant, and localized subtraction potential approaches for tangential dipoles at different eccentricities using \textit{mesh\_brain} (see Figure \ref{meshes_refinement}\,(b)). The $y$-axis shows the relative error. The physiologically relevant sources at 1-2 mm distance from the CSF are highlighted.} 
\label{accuracy_comparison_eeg_tangential_mesh_brain}
\end{figure}

\begin{figure}[]
\centerline{\includegraphics[]{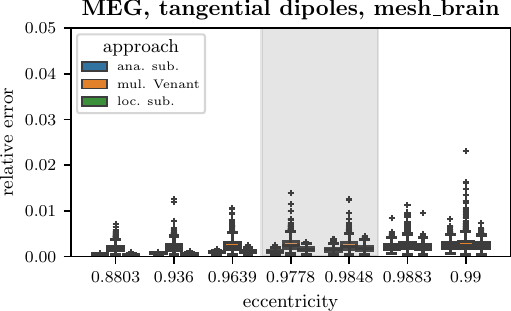}}
\caption{Accuracy comparison of MEG forward simulations using the analytical subtraction, multipolar Venant, and localized subtraction potential approaches for tangential dipoles at different eccentricities using \textit{mesh\_brain} (see Figure \ref{meshes_refinement}\,(b)). The $y$-axis shows the relative error. The physiologically relevant sources at 1-2 mm distance from the CSF are highlighted.} 
\label{accuracy_comparison_meg_tangential_mesh_brain}
\end{figure}

\begin{figure}[]
\centerline{\includegraphics[]{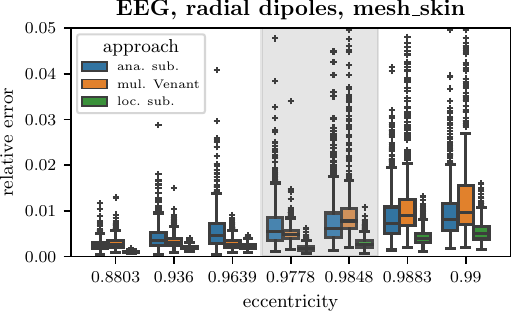}}
\caption{Accuracy comparison of EEG forward simulations using the analytical subtraction, multipolar Venant, and localized subtraction potential approaches for radial dipoles at different eccentricities using \textit{mesh\_skin} (see Figure \ref{meshes_refinement}\,(c)). The $y$-axis shows the relative error. The physiologically relevant sources at 1-2 mm distance from the CSF are highlighted.} 
\label{accuracy_comparison_eeg_radial_mesh_skin}
\end{figure}

\begin{figure}[]
\centerline{\includegraphics[]{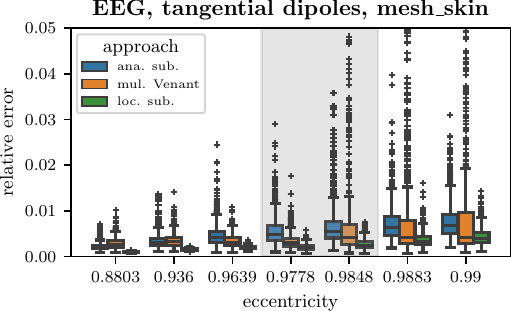}}
\caption{Accuracy comparison of EEG forward simulations using the analytical subtraction, multipolar Venant, and localized subtraction potential approaches for tangential dipoles at different eccentricities using \textit{mesh\_skin} (see Figure \ref{meshes_refinement}\,(c)). The $y$-axis shows the relative error. The physiologically relevant sources at 1-2 mm distance from the CSF are highlighted.} 
\label{accuracy_comparison_eeg_tangential_mesh_skin}
\end{figure}

\begin{figure}[]
\centerline{\includegraphics[]{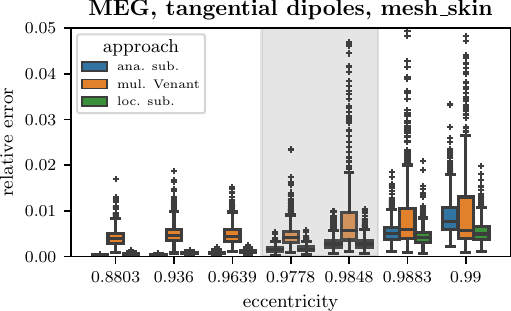}}
\caption{Accuracy comparison of MEG forward simulations using the analytical subtraction, multipolar Venant, and localized subtraction potential approaches for tangential dipoles at different eccentricities using \textit{mesh\_skin} (see Figure \ref{meshes_refinement}\,(c)). The $y$-axis shows the relative error. The physiologically relevant sources at 1-2 mm distance from the CSF are highlighted.} 
\label{accuracy_comparison_meg_tangential_mesh_skin}
\end{figure}


\bibliography{sources}

\begin{thebibliography}{10}

\bibitem{Haem_MEG}
M.~H\"am\"al\"ainen, R.~Hari, R.~J. Ilmoniemi, J.~Knuutila, and O.~V.
  Lounasmaa, ``Magnetoencephalography---theory, instrumentation, and
  applications to noninvasive studies of the working human brain,'' {\em Rev.
  Mod. Phys.}, vol.~65, pp.~413--497, Apr 1993.

\bibitem{handbook_of_neural_activity_measurement}
R.~Brette and A.~Destexhe, ``Handbook of neural activity measurement,''
  Cambridge University Press, 2012.

\bibitem{demunck_dipoles_adequate}
J.~de~Munck, B.~van Dijk, and H.~Spekreijse, ``Mathematical dipoles are
  adequate to describe realistic generators of human brain activity,'' {\em
  IEEE Transactions on Biomedical Engineering}, vol.~35, no.~11, pp.~960--966,
  1988.

\bibitem{ramon_realistic_head_models}
C.~Ramon, P.~Schimpf, and J.~Haueisen, ``Influence of head models on {EEG}
  simulations and inverse source localizations,'' {\em Biomedical engineering
  online}, vol.~5, p.~10, 02 2006.

\bibitem{fuchs_bem}
M.~Fuchs, R.~Drenckhahn, H.~Wischmann, and M.~Wagner, ``An improved boundary
  element method for realistic volume-conductor modeling,'' {\em IEEE
  Transactions on Biomedical Engineering}, vol.~45, no.~8, pp.~980--997, 1998.

\bibitem{makarov_bem_fmm}
S.~N. Makarov, M.~Hämäläinen, Y.~Okada, G.~M. Noetscher, J.~Ahveninen, and
  A.~Nummenmaa, ``Boundary element fast multipole method for enhanced modeling
  of neurophysiological recordings,'' {\em IEEE Transactions on Biomedical
  Engineering}, vol.~68, no.~1, pp.~308--318, 2021.

\bibitem{clerc_sym_bem}
J.~Kybic, M.~Clerc, T.~Abboud, O.~Faugeras, R.~Keriven, and T.~Papadopoulo, ``A
  common formalism for the integral formulations of the forward {EEG}
  problem,'' {\em IEEE Transactions on Medical Imaging}, vol.~24, no.~1,
  pp.~12--28, 2005.

\bibitem{acar_bem}
Y.~Ataseven, Z.~Akalin~Acar, C.~Acar, and N.~Gencer, ``Parallel implementation
  of the accelerated {BEM} approach for {EMSI} of the human brain,'' {\em
  Medical and biological engineering and computing}, vol.~46, pp.~671--9, 08
  2008.

\bibitem{montes_restrepo_fdm}
V.~Montes-Restrepo, P.~Van~Mierlo, G.~Strobbe, S.~Staelens, S.~Vandenberghe,
  and H.~Hallez, ``Influence of skull modeling approaches on {EEG} source
  localization,'' {\em Brain topography}, vol.~27, pp.~95--111, 01 2014.

\bibitem{turovets_fdm}
S.~Turovets, V.~Volkov, A.~Zherdetsky, A.~Prakonina, and A.~D. Malony, ``{A}
  {3D} finite-difference {BiCG} iterative solver with the fourier-jacobi
  preconditioner for the anisotropic {EIT/EEG} forward problem,'' {\em
  Computational and Mathematical Methods in Medicine}, vol.~2014, 2014.

\bibitem{cuartas_fdm}
E.~Cuartas, C.~Acosta-Medina, G.~Castellanos-Dominguez, and D.~Mantini, ``A
  finite-difference solution for the {EEG} forward problem in inhomogeneous
  anisotropic media,'' {\em Brain Topography}, vol.~32, 03 2019.

\bibitem{lohrengel_fem}
H.~Azizollahi, M.~Darbas, M.~M. Diallo, A.~E. Badia, and S.~Lohrengel, ``{EEG}
  in neonates: Forward modeling and sensitivity analysis with respect to
  variations of the conductivity,'' {\em Mathematical Biosciences and
  Engineering}, vol.~15, no.~4, pp.~905--932, 2018.

\bibitem{vermaas_femfuns}
M.~Vermaas, M.~Piastra, T.~Oostendorp, N.~Ramsey, and P.~Tiesinga, ``{FEMfuns}:
  A volume conduction modeling pipeline that includes resistive, capacitive or
  dispersive tissue and electrodes,'' {\em Neuroinformatics}, vol.~18, 10 2020.

\bibitem{beltrachini_analytic_subtraction}
L.~Beltrachini, ``The analytical subtraction approach for solving the forward
  problem in {EEG},'' {\em Journal of Neural Engineering}, vol.~16, p.~056029,
  sep 2019.

\bibitem{drechsler_subtraction_fem}
F.~Drechsler, C.~Wolters, T.~Dierkes, H.~Si, and L.~Grasedyck, ``A full
  subtraction approach for finite element method based source analysis using
  constrained delaunay tetrahedralisation,'' {\em NeuroImage}, vol.~46,
  pp.~1055--65, 04 2009.

\bibitem{vorwerk_venant}
J.~Vorwerk, A.~Hanrath, C.~H. Wolters, and L.~Grasedyck, ``The multipole
  approach for {EEG} forward modeling using the finite element method,'' {\em
  NeuroImage}, vol.~201, p.~116039, 2019.

\bibitem{miinalainen_eeg_source_modeling}
T.~Miinalainen, A.~Rezaei, D.~us, A.~Nüßing, C.~Engwer, C.~Wolters, and
  S.~Pursiainen, ``A realistic, accurate and fast source modeling approach for
  the {EEG} forward problem,'' {\em NeuroImage}, vol.~184, 08 2018.

\bibitem{thevenet_subtraction}
M.~Thevenet, O.~Bertrand, F.~Perrin, T.~Dumont, and J.~Pernier, ``The finite
  element method for a realistic head model of electrical brain activities:
  preliminary results,'' {\em Clinical Physics and Physiological Measurement},
  vol.~12, p.~89, jan 1991.

\bibitem{van_den_broek_subtraction}
M.~P. SP~van~den Broek, H~Zhou, ``Computation of neuromagnetic fields using
  finite-element method and {B}iot-{S}avart law,'' {\em Meg. Biol. Eng.
  Comput.}, vol.~34, pp.~21--26, jan 1996.

\bibitem{awada_subtraction}
K.~Awada, D.~Jackson, J.~Williams, D.~Wilton, S.~Baumann, and A.~Papanicolaou,
  ``Computational aspects of finite element modeling in {EEG} source
  localization,'' {\em IEEE Transactions on Biomedical Engineering}, vol.~44,
  no.~8, pp.~736--752, 1997.

\bibitem{marin_subtraction}
G.~Marin, C.~Guerin, S.~Baillet, L.~Garnero, and G.~Meunier, ``Influence of
  skull anisotropy for the forward and inverse problem in {EEG}: Simulation
  studies using {FEM} on realistic head models,'' {\em Human Brain Mapping},
  vol.~6, no.~4, pp.~250--269, 1998.

\bibitem{schimpf_dipole_models}
P.~Schimpf, C.~Ramon, and J.~Haueisen, ``Dipole models for the {EEG} and
  {MEG},'' {\em IEEE Transactions on Biomedical Engineering}, vol.~49, no.~5,
  pp.~409--418, 2002.

\bibitem{wolters_subtraction_method}
C.~Wolters, H.~Köstler, C.~Möller, J.~Härdtlein, L.~Grasedyck, and
  W.~Hackbusch, ``Numerical mathematics of the subtraction method for the
  modeling of a current dipole in {EEG} source reconstruction using finite
  element head models,'' {\em SIAM J. Scientific Computing}, vol.~30,
  pp.~24--45, 01 2007.

\bibitem{engwer_dg_fem_eeg}
C.~Engwer, J.~Vorwerk, J.~Ludewig, and C.~H. Wolters, ``A discontinuous
  {G}alerkin method to solve the {EEG} forward problem using the subtraction
  approach,'' {\em SIAM Journal on Scientific Computing}, vol.~39, no.~1,
  pp.~B138--B164, 2017.

\bibitem{beltrachini_projected_subtraction}
L.~Beltrachini, ``Sensitivity of the projected subtraction approach to mesh
  degeneracies and its impact on the forward problem in {EEG},'' {\em IEEE
  Transactions on Biomedical Engineering}, vol.~66, no.~1, pp.~273--282, 2019.

\bibitem{gencer_transfer_matrices}
N.~Gencer and C.~Acar, ``Sensitivity of {EEG} and {MEG} measurements to tissue
  conductivity,'' {\em Physics in medicine and biology}, vol.~49, pp.~701--17,
  04 2004.

\bibitem{piastra_dataset}
M.~C. Piastra, S.~Schrader, A.~Nüßing, M.~Antonakakis, T.~Medani,
  A.~Wollbrink, C.~Engwer, and C.~H. Wolters, ``{The WWU DUNEuro reference data
  set for combined {EEG}/{MEG} source analysis},'' June 2020.

\bibitem{geselowitz_meg}
D.~Geselowitz, ``On the magnetic field generated outside an inhomogeneous
  volume conductor by internal current sources,'' {\em IEEE Transactions on
  Magnetics}, vol.~6, no.~2, pp.~346--347, 1970.

\bibitem{piastra_dgfem_eeg_meg}
M.~C. Piastra, A.~Nüßing, J.~Vorwerk, H.~Bornfleth, R.~Oostenveld, C.~Engwer,
  and C.~H. Wolters, ``The discontinuous {G}alerkin finite element method for
  solving the {MEG} and the combined {MEG}/{EEG} forward problem,'' {\em
  Frontiers in Neuroscience}, vol.~12, 2018.

\bibitem{phd_nuessing}
A.~Nuessing, {\em Fitted and unfitted finite element methods for solving the
  {EEG} forward problem}.
\newblock dissertation, University of Muenster, 2018.

\bibitem{wilton_analytic_formulas}
D.~Wilton, S.~Rao, A.~Glisson, D.~Schaubert, O.~Al-Bundak, and C.~Butler,
  ``Potential integrals for uniform and linear source distributions on
  polygonal and polyhedral domains,'' {\em IEEE Transactions on Antennas and
  Propagation}, vol.~32, no.~3, pp.~276--281, 1984.

\bibitem{graglia_analytic_formulas}
R.~Graglia, ``On the numerical integration of the linear shape functions times
  the 3-d {G}reen's function or its gradient on a plane triangle,'' {\em IEEE
  Transactions on Antennas and Propagation}, vol.~41, no.~10, pp.~1448--1455,
  1993.

\bibitem{bastian_dune_framework}
P.~Bastian, M.~Blatt, A.~Dedner, N.-A. Dreier, C.~Engwer, R.~Fritze,
  C.~Gräser, C.~Grüninger, D.~Kempf, R.~Klöfkorn, M.~Ohlberger, and
  O.~Sander, ``The {D}une framework: Basic concepts and recent developments,''
  {\em Computers \& Mathematics with Applications}, vol.~81, pp.~75--112, 2021.
\newblock Development and Application of Open-source Software for Problems with
  Numerical PDEs.

\bibitem{Schrader_duneuro}
S.~Schrader, A.~Westhoff, M.~C. Piastra, T.~Miinalainen, S.~Pursiainen,
  J.~Vorwerk, H.~Brinck, C.~H. Wolters, and C.~Engwer, ``Duneuro—a software
  toolbox for forward modeling in bioelectromagnetism,'' {\em PLOS ONE},
  vol.~16, pp.~1--21, 06 2021.

\bibitem{roth_meg_integration_formulas}
B.~Roth and S.~Sato, ``Accurate and efficient formulas for averaging the
  magnetic field over a circular coil,'' {\em Biomagnetism: Clinical aspects},
  jan 1992.

\bibitem{ahrens_paraview}
J.~P. Ahrens, B.~Geveci, and C.~C. Law, ``Para{V}iew: An end-user tool for
  large-data visualization,'' in {\em The Visualization Handbook}, 2005.

\bibitem{phd_piastra_meg}
M.~C. Piastra, {\em New Finite Element Methods for Solving the {MEG} and the
  Combined {MEG}/{EEG} Forward Problem}.
\newblock dissertation, University of Muenster, 2019.

\bibitem{dassios_geselowitz_formula}
G.~Dassios and F.~Kariotou, ``On the {G}eselowitz formula in biomagnetics,''
  {\em Quarterly of Applied Mathematics}, vol.~61, no.~2, pp.~387--400, 2003.

\bibitem{bastian_early_dune_paper}
P.~Bastian, M.~Blatt, A.~Dedner, C.~Engwer, R.~Klöfkorn, R.~Kornhuber,
  M.~Ohlberger, and O.~Sander, ``A generic grid interface for parallel and
  adaptive scientific computing. part ii: Implementation and tests in {DUNE},''
  {\em Computing}, vol.~82, pp.~121--138, 01 2008.

\bibitem{demunck_analytical_eeg}
J.~de~Munck and M.~Peters, ``A fast method to compute the potential in the
  multisphere model ({EEG} application),'' {\em IEEE Transactions on Biomedical
  Engineering}, vol.~40, no.~11, pp.~1166--1174, 1993.

\bibitem{sarvas_analytical_meg}
J.~Sarvas, ``Basic mathematical and electromagnetic concepts of the biomagnetic
  inverse problem,'' {\em Physics in Medicine and Biology}, vol.~32,
  pp.~11--22, jan 1987.

\bibitem{lew_potential_approach_comparison}
S.~Lew, C.~Wolters, T.~Dierkes, C.~Röer, and R.~MacLeod, ``Accuracy and
  run-time comparison for different potential approaches and iterative solvers
  in finite element method based {EEG} source analysis,'' {\em Applied
  Numerical Mathematics}, vol.~59, no.~8, pp.~1970--1988, 2009.

\bibitem{de_munck_wolters_clerc_handbook_neural_activity}
J.~C. de~Munck, C.~H. Wolters, and M.~Clerc, {\em {EEG} and {MEG}: forward
  modeling}, p.~192–256.
\newblock Cambridge University Press, 2012.

\bibitem{gmsh_paper}
C.~Geuzaine and J.-F. Remacle, ``Gmsh: A 3-d finite element mesh generator with
  built-in pre- and post-processing facilities,'' {\em International Journal
  for Numerical Methods in Engineering}, vol.~79, no.~11, pp.~1309--1331, 2009.

\bibitem{bastian_ug_grid}
P.~Bastian, K.~Birken, K.~Johannsen, S.~Lang, N.~Neuss, H.~Rentz-Reichert, and
  C.~Wieners, ``{UG}—a flexible software toolbox for solving partial
  differential equations,'' {\em Comp. Vis. Sci}, vol.~1, pp.~27--40, 07 1997.

\bibitem{rudin_functional_analysis}
W.~Rudin, {\em Functional Analysis}.
\newblock McGraw-Hill, 2~ed., 1991.

\bibitem{wolters_geometry_adapted_hexahedra}
C.~Wolters, A.~Anwander, G.~Berti, and U.~Hartmann, ``Geometry-adapted
  hexahedral meshes improve accuracy of finite-element-method-based {EEG}
  source analysis,'' {\em IEEE transactions on bio-medical engineering},
  vol.~54, pp.~1446--53, 08 2007.

\end{thebibliography}
\bibliographystyle{ieeetr}

\end{document}